\documentclass[10pt]{article}
\usepackage[letterpaper,top=0.68in,bottom=0.72in,left=0.68in,right=0.68in,columnsep=0.24in]{geometry}
\usepackage{graphicx}
\graphicspath{{figures/}}
\usepackage{amsmath,amssymb}
\usepackage{bm}
\usepackage[utf8]{inputenc}
\usepackage[T1]{fontenc}
\usepackage{mathptmx}
\usepackage{microtype}
\usepackage[numbers,sort&compress]{natbib}
\usepackage{amsthm}
\usepackage{url}
\usepackage[ruled,vlined,linesnumbered,algo2e]{algorithm2e}
\usepackage{balance}
\newtheorem{proposition}{Proposition}
\newtheorem{definition}{Definition}
\newcommand{\ones}{\mathbf 1}
\newcommand{\D}{\Delta}
\newcommand{\R}{\mathbb R}
\newcommand{\onlinecite}[1]{\citenum{#1}}

\begin{document}
\title{Structured Stochastic Representations of Integrated Dynamic Strategies}
\author{Fredy Vides\\[0.25em]
\small Department of Applied Mathematics, School of Mathematics and Computer Science,\\[-0.1em]
\small Universidad Nacional Aut\'onoma de Honduras (UNAH)}
\date{9 September 2026}
\twocolumn[
\begin{@twocolumnfalse}
\maketitle
\vspace{-1.6em}
\begin{abstract}
Dynamic allocation decisions couple present resource use to evolving internal conditions, delayed returns, and future costs. We represent this interaction by four probability localizations linked through regime-indexed, graph-constrained column-stochastic operators. Pre-action state or context selects a locally affine model, while action-dependent changes update subsequent regimes, yielding a causal switched representation of nonlinear evolution. We characterize operator identifiability relative to the graph, the stochastic constraints, and the sampled embedding, separating coefficient recovery from predictive equivalence on the decision domain. Decision making is then formulated through implementable return--cost acceptability regions. Finite-horizon error propagation supplies conservative classification margins, and simultaneous intervals distinguish model-relative near-optimality from certified $\epsilon$-optimality over a declared finite policy class. Regime-indexed stochastic feedback is admitted when it satisfies the same certification test. Reproducible synthetic laboratories for personal preparation, supplier participation, and customer retention illustrate exact, operator-supplied, and noisy feedback cases. Multinomial experiments show improving recovery of the feedback function and fewer unresolved decisions with increasing sample size, while unrestricted off-policy recovery remains limited. The contribution is a structure-preserving representation--identification--decision workflow, not a domain-specific physiological or commercial calibration.
\end{abstract}
\vspace{0.8em}
\end{@twocolumnfalse}
]
\section{Introduction}
Allocation is intrinsically dynamic. A present choice changes not only an immediate output but also the internal configuration from which later choices, returns, and costs arise. The consequences depend on history, operating conditions, resource availability, and the order of actions. This structure appears in preparation and recovery, inventory replenishment, customer retention, and other human--cyber--physical or organizational processes. A useful representation should therefore preserve admissible transitions and uncertainty while remaining explicit enough to support an implementable decision.

Probability vectors provide one such interface between observed collective frequencies and dynamical models. Recent stochastic reservoir computers likewise use probabilities of reservoir outcomes as trainable readouts and describe their evolution through controlled Markov dynamics; universality results establish the approximation capacity of particular stochastic echo-state-network classes~\cite{ehlers2025stochastic}. The present objective is different. We do not seek universality or a physical reservoir architecture. We seek graph-structured stochastic operators whose predictions can be connected to return--cost decisions and whose nonidentifiable directions are reported rather than hidden. The construction extends the probability-localization viewpoint of Vides~\cite{vides2026localization} toward allocation and feedback, using the structured stochastic identification architecture of Banegas and Vides~\cite{banegas2025ssrc}.

Nonlinear and nonstationary behavior is represented by regime-indexed local models. Switched and hybrid-system formulations provide established tools for describing mode-dependent dynamics~\cite{hamilton1989regime,liberzon2003switching,paoletti2007hybrid}. Here the regime is selected causally from pre-action information, or updated for the following period when the current action affects it. Each active probability map remains affine, but state- or context-dependent selection produces nonlinear global evolution. This moderate use of switching avoids presenting one fitted matrix as a universal description of all operating conditions.

Identification and decision are deliberately separated. Classical system identification emphasizes excitation, model structure, and the special difficulties created by closed-loop data~\cite{ljung1999system,forssell1999closedloop}. Those difficulties are central here because a behavioral allocation law can correlate actions with states and leave feasible operator directions observationally equivalent. Exact coefficient recovery is therefore stronger than predictive adequacy on a declared policy domain. On the decision side, stochastic model predictive control and chance-constrained methods optimize expected performance while managing probabilistic violations~\cite{mesbah2016smpc}. Our aim is narrower and complementary: classify policies against explicit achievement and cost thresholds, retain unresolved cases when error intervals cross a boundary, and optimize only within the certified region. This implements a dynamic form of satisficing~\cite{simon1955satisficing} rather than silently converting every tradeoff into one weighted objective.

Human-facing examples add a further distinction between measurement error and substantive heterogeneity. Variability across individuals or conditions need not be mere noise; in neuroscience, it has been interpreted as an adaptive property of complex organization, while current measurement approaches must still separate biological variation from methodological uncertainty~\cite{forkel2026neurovariability}. The connection used here is methodological, not physiological: pooled probability localizations require an explicit transportability assumption, and different internal configurations or policies may be functionally equivalent for a declared outcome without being structurally identical.

The paper makes four contributions. First, it formulates an integrated dynamic strategy through coupled allocation, internal-state, return, and cost localizations with a causal regime convention. Second, it gives graph- and design-relative identifiability conditions and distinguishes the SSRC reduction of repeated tensor words from additional relations induced by the simplex and by the observed design. Third, it develops acceptability certificates from finite-horizon error propagation and separates nominal acceptability, certified acceptability, model-relative near-optimality, and finite-class certified $\epsilon$-optimality. Fourth, it introduces a regime-indexed stochastic feedback subclass and evaluates exact, operator-supplied, and multinomially noisy instances in reproducible synthetic laboratories.

The intended scope is finite-state processes and finite localizations of compact state spaces. The results do not constitute a general solution to stochastic control, guarantee identification beyond the excited domain, or calibrate physiological or commercial behavior. Potentially cyclic operation motivates retaining time, regime, and cost information beyond terminal occupancy; results for circulation in homogeneous Markov chains~\cite{jia2016} motivate such observables but are not transferred directly to adaptive switched policies. Section~II develops the representation, identification, certification, and feedback constructions. Section~III states the combined computational workflow, Sec.~IV clarifies switching and enrichment, Sec.~V reports the numerical laboratories, and Sec.~VI discusses implications and limitations.

\section{Coupled stochastic descriptions}
Write $\D^{d-1}=\{z\in\R^d:z\ge0,\ones^\top z=1\}$. At decision time $t$,
\begin{equation}
u_t\in\D^{m-1},\quad p_t\in\D^{n-1},\quad
q_t\in\D^{n_q-1},\quad v_t\in\D^{s-1}.
\end{equation}
These describe input composition, internal-state localization, return categories, and cost categories, respectively. A composition does not determine the total available resource: a budget $b_t$ must be supplied separately whenever absolute scale affects evolution. Distinct resources can require a product of allocation simplexes rather than one shared simplex.

Return and cost categories are measurable, disjoint, and exhaustive within their respective observation spaces. If $Y_t$ and $c_t$ denote return and cost observations, then $q_t[a]=\Pr(Y_t\in R_a)$ and $v_t[b]=\Pr(c_t\in C_b)$. Joint categories can encode, for example, both depletion and compensating resource consumption without assigning a common monetary value.

\begin{definition}
An integrated dynamic strategy is a sequence of admissible allocation rules based on available information, designed to direct the joint evolution of internal localization, returns, and costs under resource and operational constraints.
\end{definition}

Let $\mathcal I_t^-$ denote the information available immediately before the action at time $t$. Define the controller-accessible localization $\bar p_t$ by $\bar p_t=p_t$ when the internal localization is observed and by $\bar p_t=\Phi_t(\mathcal I_t^-)$ when it must be produced by a declared filter or estimator $\Phi_t$. The within-period causal order is
\begin{equation}
\mathcal I_t^-\longrightarrow(\bar p_t,\zeta_t)
\longrightarrow\nu_t\longrightarrow u_t
\longrightarrow(q_t,v_t,p_{t+1}).
\label{eq:causal-order}
\end{equation}
Thus the current regime is either supplied exogenously or selected before the action by $\nu_t=g(\bar p_t,\zeta_t)$, and an allocation rule then chooses $u_t=\pi_t(\mathcal I_t^-,\nu_t)\in\mathcal U_t(b_t)$. The primary prototype is the switched additive stochastic model
\begin{align}
p_{t+1}&=\alpha_p W_{p,\nu_t}p_t+(1-\alpha_p)V_{p,\nu_t}u_t,\\
q_t&=\alpha_q W_{q,\nu_t}p_t+(1-\alpha_q)V_{q,\nu_t}u_t,\\
v_t&=\alpha_v W_{v,\nu_t}p_t+(1-\alpha_v)V_{v,\nu_t}u_t.
\label{eq:switched}
\end{align}
Here $\nu_t$ denotes the operating regime, not enrichment depth; enrichment is indexed separately by a superscript $(\ell)$ when needed. All $W,V$ blocks are column-stochastic, with dimensions $n\times n,n\times m$ for internal dynamics and $n_q\times n,n_q\times m$ or $s\times n,s\times m$ for outputs. Mixing coefficients lie in $[0,1]$. The output timing in Eq.~\eqref{eq:switched} is contemporaneous with $(p_t,u_t)$; other timing conventions must be specified when comparing laboratories.

A prescribed regime sequence gives a time-varying affine system. In the fully observed case $\bar p_t=p_t$, a piecewise-constant pre-action rule $\nu_t=g(p_t,\zeta_t)$ partitions the state--context domain; substitution in Eq.~\eqref{eq:switched} then yields a piecewise-affine state-dependent map and therefore a nonlinear global evolution~\cite{hamilton1989regime,liberzon2003switching}, although every active local model remains linear in its stochastic embedding. If the action is intended to affect regime selection, causality is preserved by assigning the result to the next period, for example $\nu_{t+1}=g(\bar p_t,u_t,\zeta_t)$, rather than using $u_t$ to define $\nu_t$ while simultaneously using $\nu_t$ to define $u_t$. Under partial observation, the global dynamics also include the declared filtering recursion and need not be piecewise affine in the latent state alone. This is the intended moderate use of switching: the regime family can distinguish nonlinear operating regions without attributing nonlinearity to a fixed, externally prescribed schedule. The prototypes use prescribed, observed regimes common to the represented population. Hidden, uncertain, or heterogeneous regimes require joint regime inference, posterior averaging, or robust constraints; a policy may not condition on an unavailable realized regime. Bayesian priors may constrain blocks or mixing coefficients, while the reported experiments fix the coefficients and identify the blocks from collective output counts.

For fixed regime and input, the internal transition is equivalently
\begin{equation}
P_\nu(u)=\alpha_pW_{p,\nu}+(1-\alpha_p)(V_{p,\nu}u)\ones^\top.
\end{equation}
This is a specific restricted controlled kernel: the input contribution is independent of the source state. More general state--input interactions can be introduced through joint state--input kernels or enriched models. When the internal localization is not observed, propagation may still use the latent model state $p_t$, but implementable policies must use the accessible localization $\bar p_t$ and a declared filtering recursion; replacing $\bar p_t$ by the unavailable $p_t$ would be an oracle policy.

\subsection{Regime-wise SSRC identification}
Let $y_{p,t}=p_{t+1}$, $y_{q,t}=q_t$, and $y_{v,t}=v_t$. With observed historical allocations $u_t$, use an output-specific stochastic embedding
\begin{equation}
z_{j,t}=\begin{bmatrix}\alpha_jp_t\\(1-\alpha_j)u_t\end{bmatrix},
\qquad y_{j,t}=M_{j,\nu_t}z_{j,t},
\end{equation}
where $M_{j,\nu}=[W_{j,\nu}\ V_{j,\nu}]$. This gives a degree-one block realization of the SSRC identification architecture in Ref.~\onlinecite{banegas2025ssrc}, adapted to the probability-localization setting of Ref.~\onlinecite{vides2026localization}. If a common arbitrary $\beta\in(0,1)$ is used instead, its fitted blocks must be scaled by $\alpha_j/\beta$ and $(1-\alpha_j)/(1-\beta)$; the resulting operator generally has prescribed block column sums rather than unit column sums. Taking $\beta_j=\alpha_j$ avoids this distinction in the first prototype.

The prototype adapts the supplied pairwise SSRC routine: construct the design only on graph-admissible entries, solve the augmented nonnegative normal-equation system with column-sum equations, and normalize positive columns. The implementation uses a NumPy active-set NNLS solver~\cite{lawson1974solving} and records conditioning and pre-normalization errors. It does not identify this procedure with exact equality-constrained least squares, nor claim unconditional numerical stability.

The relevant rank test must respect both the graph and the stochastic constraints. For an output dimension $r$, let
\begin{equation}
\mathcal M_G=\{M\in\mathbb S_{r,d}(\mathbb R):M_{ij}=0\text{ when }(i,j)\notin G\}
\end{equation}
and let $\operatorname{par}(\mathcal M_G)$ be its parallel space of graph-supported matrices $D$ satisfying $\ones^\top D=0$. If $Z=[z_1,\ldots,z_N]$ and the columns of $B_G$ form a basis for $\operatorname{vec}(\operatorname{par}(\mathcal M_G))$, define the restricted design
\begin{equation}
\mathcal D_G(Z)=(Z^\top\otimes I_r)B_G.
\label{eq:restricted-design}
\end{equation}

\begin{proposition}[Graph- and design-relative identifiability]
\label{prop:restricted-identifiability}
The observation map $M\mapsto MZ$ is injective on $\mathcal M_G$ if
\begin{equation}
\ker(Z^\top\otimes I_r)\cap
\operatorname{vec}(\operatorname{par}(\mathcal M_G))=\{0\},
\label{eq:restricted-identifiability}
\end{equation}
equivalently if $\operatorname{rank}\mathcal D_G(Z)=\dim\operatorname{par}(\mathcal M_G)$. Conversely, if Eq.~\eqref{eq:restricted-identifiability} fails, every $M$ in the relative interior of $\mathcal M_G$ is observationally equivalent on $Z$ to another member of $\mathcal M_G$.
\end{proposition}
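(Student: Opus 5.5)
The plan is to reduce everything to linear algebra on differences. Since $\mathcal M_G$ is the intersection of the affine subspace $\{M:\ M_{ij}=0 \text{ for } (i,j)\notin G,\ \ones^\top M=\ones^\top\}$ with the nonnegative orthant, any two members $M,M'\in\mathcal M_G$ satisfy $D=M-M'\in\operatorname{par}(\mathcal M_G)$: $D$ is graph-supported and $\ones^\top D=\ones^\top-\ones^\top=0$. The observation map is linear, so $MZ=M'Z$ if and only if $DZ=0$. Using the standard identity $\operatorname{vec}(DZ)=(Z^\top\otimes I_r)\operatorname{vec}(D)$, this is the same as $\operatorname{vec}(D)\in\ker(Z^\top\otimes I_r)$.

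For the sufficiency direction, assume Eq.~\eqref{eq:restricted-identifiability}. If $MZ=M'Z$, then $\operatorname{vec}(M-M')$ lies in the trivial intersection, so $M=M'$. For the rank reformulation, write $\operatorname{vec}(D)=B_Gc$ with $c$ unique, because $B_G$ has linearly independent columns. Then $(Z^\top\otimes I_r)\operatorname{vec}(D)=\mathcal D_G(Z)c$. The intersection is trivial exactly when $\mathcal D_G(Z)c=0$ forces $c=0$, that is, when $\mathcal D_G(Z)$ has full column rank. Its number of columns is $\dim\operatorname{par}(\mathcal M_G)$, which gives the stated equivalence by rank--nullity.

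For the converse, suppose there is a nonzero $D\in\operatorname{par}(\mathcal M_G)$ with $DZ=0$, and take $M$ in the relative interior of $\mathcal M_G$. The step needing the most care is identifying this relative interior. The intended claim is that $\operatorname{par}(\mathcal M_G)$ is the direction space of the affine hull of $\mathcal M_G$. This holds provided every edge in $G$ can be positive in some member of $\mathcal M_G$, i.e. $\mathcal M_G$ has a point that is strictly positive on all of $G$. That point exists exactly when every column $j$ has at least one admissible entry, which is needed for $\mathcal M_G$ to be nonempty anyway: one averages the column-normalized indicators of $G$. I would state this as the standing nondegeneracy assumption. It implies that the relative interior consists precisely of the $M\in\mathcal M_G$ with $M_{ij}>0$ for all $(i,j)\in G$.

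Given such $M$, set $M'=M+\varepsilon D$ with $0<\varepsilon<\min_{(i,j)\in G,\,D_{ij}<0} M_{ij}/|D_{ij}|$; if $D$ has no negative entries, any $\varepsilon>0$ works. Then $M'$ is nonnegative, supported on $G$ and column-stochastic, so $M'\in\mathcal M_G$. It differs from $M$ because $D\neq0$, and $M'Z=MZ+\varepsilon DZ=MZ$, so $M$ and $M'$ are observationally equivalent on $Z$. I expect only the relative-interior bookkeeping, including possibly degenerate graphs, to need attention; the rest is routine vectorization.
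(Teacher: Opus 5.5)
Your proposal is correct and follows essentially the same route as the paper. Both reduce $MZ=M'Z$ to $(Z^\top\otimes I_r)\operatorname{vec}(M-M')=0$ with $M-M'\in\operatorname{par}(\mathcal M_G)$, and for the converse both perturb a relative-interior point by a small multiple of a nonzero null direction. Your extra details make explicit what the paper's two-line proof uses implicitly: the full-column-rank reading of $\mathcal D_G(Z)$ via injectivity of $B_G$, and the fact that nonemptiness of $\mathcal M_G$ makes its relative interior exactly the members strictly positive on $G$.
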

\begin{proof}
For $M,M'\in\mathcal M_G$, equality $MZ=M'Z$ is equivalent to $(Z^\top\otimes I_r)\operatorname{vec}(M-M')=0$, while $M-M'\in\operatorname{par}(\mathcal M_G)$. Hence Eq.~\eqref{eq:restricted-identifiability} implies $M=M'$. Conversely, a nonzero direction in the intersection can be added with sufficiently small positive or negative coefficient to any relative-interior point without violating its support, nonnegativity, or column sums.
\end{proof}

Accordingly, the reported constrained rank, nullity, and condition number are computed from $\mathcal D_G(Z)$, not from the unconstrained graph-entry design alone. Rank deficiency then has a precise meaning: the observations do not distinguish every locally feasible stochastic coefficient direction. Boundary points may still be unique through active nonnegativity constraints, so the restricted rank test is sufficient throughout the feasible class and necessary at its relative-interior points.

Separate recovery of both blocks requires $0<\alpha_j<1$, sufficient excitation, and structural identifiability; at an endpoint one block is unobserved. In a dense model with $0<\alpha<1$, the transformations
\begin{equation}
W\mapsto W+d\ones^\top,\qquad
V\mapsto V-\frac{\alpha}{1-\alpha}d\ones^\top,
\end{equation}
with $\ones^\top d=0$, preserve predictions whenever nonnegativity remains valid. Graph restrictions can remove this ambiguity; data coverage and constrained design rank must be checked. A prior can select among observationally equivalent blocks without making them data-identifiable.

\begin{proposition}[Common-schedule forgetting]
\label{prop:common-schedule-forgetting}
For two trajectories of Eq.~\eqref{eq:switched} with the same regime sequence and inputs,
\begin{equation}
\|p_t-\widetilde p_t\|_1\le\alpha_p^t\|p_0-\widetilde p_0\|_1.
\end{equation}
Moreover, on the zero-mass subspace, the restriction of $P_\nu(u)$ equals the restriction of $\alpha_pW_{p,\nu}$ and is independent of constant $u$.
\end{proposition}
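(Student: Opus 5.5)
The plan is to subtract the two trajectories so that the input terms cancel, and then use a contraction property of column-stochastic matrices on zero-mass vectors.

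\textbf{Step 1: cancel the inputs.} Write $e_t=p_t-\widetilde p_t$. Because both trajectories of Eq.~\eqref{eq:switched} share $\nu_t$ and $u_t$, the term $(1-\alpha_p)V_{p,\nu_t}u_t$ drops out of the difference. This leaves the linear recursion
\[
e_{t+1}=\alpha_pW_{p,\nu_t}e_t .
\]

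\textbf{Step 2: one-step $\ell_1$ bound.} For any column-stochastic $W$ and any vector $x$, the triangle inequality gives
\[
\|Wx\|_1\le\sum_j\sum_i W_{ij}|x_j|=\|x\|_1,
\]
using nonnegativity and unit column sums. Applying this to the recursion gives $\|e_{t+1}\|_1\le\alpha_p\|e_t\|_1$. Induction on $t$ then yields $\|e_t\|_1\le\alpha_p^t\|e_0\|_1$.

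Two remarks on this step. The bound does not require the $e_t$ to have zero mass, although they do: $\ones^\top e_0=0$, and the property is preserved because $\ones^\top W=\ones^\top$. The bound also holds for an arbitrary, even time-varying, regime sequence, because each $W_{p,\nu}$ is column-stochastic with operator $1$-norm exactly $1$. No spectral-gap argument is needed; the factor $\alpha_p$ alone supplies the contraction.

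\textbf{Step 3: the zero-mass claim.} For $x$ with $\ones^\top x=0$,
\[
P_\nu(u)x=\alpha_pW_{p,\nu}x+(1-\alpha_p)(V_{p,\nu}u)(\ones^\top x)=\alpha_pW_{p,\nu}x .
\]
So on this subspace $P_\nu(u)$ coincides with $\alpha_pW_{p,\nu}$, and the restriction does not depend on $u$. The subspace is invariant under this map because $\ones^\top\alpha_pW_{p,\nu}x=\alpha_p\ones^\top x=0$.

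This also gives a consistency check with Step 1: the difference of two trajectories evolves by this restricted map. Since both trajectories use the same $u$, the input independence holds even when $u_t$ varies in time. The statement's phrase ``constant $u$'' refers to a fixed argument of $P_\nu(u)$.

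\textbf{Main obstacle.} There is no real technical obstacle; the key step is the $\ell_1$ nonexpansiveness of column-stochastic matrices. The only care needed is that the inputs match exactly. With feedback policies $u_t=\pi(\bar p_t)$, the inputs would differ between the two trajectories and an extra Lipschitz term would appear. The hypothesis of the proposition excludes that case.
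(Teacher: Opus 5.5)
Your proposal is correct and follows the paper's proof essentially verbatim. The shared input terms cancel, and $\ell^1$ nonexpansiveness of the column-stochastic $W_{p,\nu_t}$ gives the one-step factor $\alpha_p$, which you iterate; on zero-mass vectors the rank-one term $(V_{p,\nu}u)\ones^\top x$ vanishes, and your entrywise verification and invariance remark only spell out details the paper leaves implicit.
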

\begin{proof}
The common input terms cancel, and column-stochastic matrices are nonexpansive in $\ell^1$. Iterate this inequality. For $\ones^\top z=0$, the rank-one input term satisfies $(V_{p,\nu}u)\ones^\top z=0$.
\end{proof}
Strict forgetting requires $\alpha_p<1$. Different state-dependent switches or feedback inputs invalidate the cancellation unless separately controlled. Thus within this additive prototype, changing a constant allocation can change output localization and the stationary law without changing fixed-regime zero-mass modes. This dynamical observation is distinct from stability of the identification algorithm.

\subsection{Acceptability under residual uncertainty}
The primary decision need not maximize target probability~\cite{simon1955satisficing}. Let $E_H$ be a specified achievement event and $B_H$ an unsustainable-cost event over horizon $H$. These can represent hitting, sustained performance, cumulative budget exceedance, or persistent adverse operation. For available information $\mathcal I$, write
\begin{align}
R_H(\pi)&=\Pr(E_H\mid\mathcal I,\pi),\\
C_H(\pi)&=\Pr(B_H\mid\mathcal I,\pi).
\end{align}
Given an implementable policy class $\Pi(\mathcal I)$, an aspiration threshold $\alpha$, and a tolerable risk $\beta$, define
\begin{equation}
\mathcal A=\{\pi\in\Pi(\mathcal I):R_H(\pi)\ge\alpha,
\ C_H(\pi)\le\beta\}.
\label{eq:objective}
\end{equation}
The primary task is to find an element of $\mathcal A$. Cost, complexity, or implementation preferences may subsequently distinguish acceptable strategies. Thresholds are context-dependent and are not silently relaxed when no strategy can be certified. The separate constraints imply only
\begin{equation}
\Pr(E_H\cap B_H^c\mid\mathcal I,\pi)\ge\max(0,\alpha-\beta).
\end{equation}
A stronger joint requirement must be stated separately. Marginal laws $q_t,v_t$ alone do not determine these path probabilities. A joint process model, with event flags or accumulators where needed, is required.

More generally, the achievement and cost events need not be singular or evaluated over the full horizon. Given finite index sets $I,J$, coordinate subsets $\mathcal C_i,\mathcal C_j$, and time subsets $T_i,T_j\subseteq\{1,\dots,H\}$, define achievement probabilities $R_i(\pi)=\Pr(E_i\mid\mathcal I,\pi)$ and cost-event probabilities $C_j(\pi)=\Pr(B_j\mid\mathcal I,\pi)$ for events restricted to $(T_i,\mathcal C_i)$ and $(T_j,\mathcal C_j)$ respectively. The generalized acceptability region is
\begin{equation}
\mathcal A_{\mathcal T} = \{\pi\in\Pi(\mathcal I): R_i(\pi)\ge\alpha_i\ \forall i\in I,\ C_j(\pi)\le\beta_j\ \forall j\in J\},
\end{equation}
recovering Eq.~\eqref{eq:objective} as the singleton case $I=J=\{1\}$, $T_1=\{H\}$. Achievement and avoidance restrictions remain separate rather than combined into a single weighted score, so that a shortfall in one cannot be offset by a surplus in another. Proposition~\ref{prop:acceptability-margins} applies componentwise to each $(R_i,\delta_{R_i})$ and $(C_j,\delta_{C_j})$ pair without change of form.

Distinct operating circumstances may justify a finite family of declared decision postures $h\in\mathcal H$, with posture-specific thresholds $(\alpha^{(h)},\beta^{(h)})$ and acceptability regions $\mathcal A^{(h)}$. The active posture must be selected from pre-action information by an auditable rule before policy comparison. It cannot be chosen after optimization merely because another posture yields a desirable candidate. This construction represents context-dependent risk tolerance without silently relaxing an unsuccessful certificate.

\begin{definition}[Decision sufficiency relative to acceptability]
\label{def:decision-sufficiency}
For fixed information, policy class, events, horizon, and thresholds, a representation is sufficient for acceptability classification if it preserves membership in $\mathcal A$. A weaker selection requirement is to certify at least one implementable policy in $\mathcal A$, even if other policies remain unresolved.
\end{definition}

This target-relative requirement does not demand exact reconstruction of every internal transition. Consider estimated event probabilities $\widehat R_H,\widehat C_H$. The nominal acceptability region under the estimated model is
\begin{equation}
\widehat{\mathcal A}_0=\{\pi\in\Pi(\mathcal I):
\widehat R_H(\pi)\ge\alpha,\ \widehat C_H(\pi)\le\beta\}.
\label{eq:nominal-region}
\end{equation}
Membership in $\widehat{\mathcal A}_0$ is a model-relative statement and is not yet a certificate for membership in $\mathcal A$. Suppose instead that justified, possibly policy-dependent bounds satisfy
\begin{equation}
|R_H(\pi)-\widehat R_H(\pi)|\le\delta_R(\pi),\qquad
|C_H(\pi)-\widehat C_H(\pi)|\le\delta_C(\pi).
\label{eq:event-bounds}
\end{equation}
The corresponding certified region is
\begin{multline}
\widehat{\mathcal A}_{\delta}=\{\pi\in\Pi(\mathcal I):
\widehat R_H(\pi)-\delta_R(\pi)\ge\alpha,\\
\widehat C_H(\pi)+\delta_C(\pi)\le\beta\}.
\label{eq:certified-region}
\end{multline}
Thus $\widehat{\mathcal A}_{\delta}\subseteq\widehat{\mathcal A}_0$, and valid bounds imply $\widehat{\mathcal A}_{\delta}\subseteq\mathcal A$ on their coverage event.

The following result supplies one conservative source of such margins for the pointwise criteria used below. Let $M_{j,\nu}=[W_{j,\nu}\ V_{j,\nu}]$ and $\widehat M_{j,\nu}$ denote true and estimated column-stochastic operators, and assume
\begin{equation}
\|\widehat M_{j,\nu}-M_{j,\nu}\|_1\le\varepsilon_{j,\nu},
\qquad j\in\{p,q,v\}.
\label{eq:operator-bounds}
\end{equation}
\begin{proposition}[Finite-horizon open-loop propagation]
\label{prop:open-loop-propagation}
Consider true and estimated trajectories with a common prescribed regime sequence and common open-loop inputs. If $e_t=\|\widehat p_t-p_t\|_1$, then
\begin{align}
e_{t+1}&\le\alpha_p e_t+\varepsilon_{p,\nu_t},\\
\|\widehat q_t-q_t\|_1&\le\alpha_q e_t+\varepsilon_{q,\nu_t},\\
\|\widehat v_t-v_t\|_1&\le\alpha_v e_t+\varepsilon_{v,\nu_t}.
\end{align}
Consequently,
\begin{equation}
e_t\le\alpha_p^t e_0+
\sum_{k=0}^{t-1}\alpha_p^{t-1-k}\varepsilon_{p,\nu_k}.
\label{eq:state-error-sum}
\end{equation}
For any return category $a$, cost category $b$, terminal index $T$, and evaluation set $S$,
\begin{align}
|\widehat q_T[a]-q_T[a]|&\le
\frac12(\alpha_qe_T+\varepsilon_{q,\nu_T})=: \delta_R,\\
\left|\max_{t\in S}\widehat v_t[b]-\max_{t\in S}v_t[b]\right|&\le
\frac12\max_{t\in S}(\alpha_ve_t+\varepsilon_{v,\nu_t})=: \delta_C.
\end{align}
\end{proposition}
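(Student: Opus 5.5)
The plan is to compare the true and estimated recursions one step at a time and split each difference into a state-error part and an operator-error part. For the internal dynamics, subtract
$p_{t+1}=\alpha_pW_{p,\nu_t}p_t+(1-\alpha_p)V_{p,\nu_t}u_t$
from its estimated counterpart, which uses $\widehat W_{p,\nu_t},\widehat V_{p,\nu_t}$ and the same $u_t$. Adding and subtracting $\alpha_p\widehat W_{p,\nu_t}p_t$ gives
\begin{equation*}
\widehat p_{t+1}-p_{t+1}=\alpha_p\widehat W_{p,\nu_t}(\widehat p_t-p_t)+(\widehat M_{p,\nu_t}-M_{p,\nu_t})z_{p,t},
\end{equation*}
where $z_{p,t}=[\alpha_pp_t;(1-\alpha_p)u_t]$ is the stochastic embedding of the identification subsection.

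The first term is controlled by the $\ell^1$ nonexpansiveness of column-stochastic matrices, the same fact used in Proposition~\ref{prop:common-schedule-forgetting}. This bounds it by $\alpha_pe_t$. For the second term, note that $z_{p,t}\ge0$ and $\ones^\top z_{p,t}=\alpha_p+(1-\alpha_p)=1$. Hence $\|z_{p,t}\|_1=1$, and the induced $\ell^1$ operator norm (maximum absolute column sum) gives the bound $\varepsilon_{p,\nu_t}$ from Eq.~\eqref{eq:operator-bounds}. The output bounds for $q_t$ and $v_t$ follow by the identical decomposition with $\alpha_q$ or $\alpha_v$ in place of $\alpha_p$. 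Because the outputs are contemporaneous with $(p_t,u_t)$, the state error entering them is $e_t$. Unrolling the linear recursion $e_{t+1}\le\alpha_pe_t+\varepsilon_{p,\nu_t}$ by induction on $t$ gives Eq.~\eqref{eq:state-error-sum}.

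The step that needs the most care is the factor $\tfrac12$ in the categorical bounds. It relies on the difference of two probability vectors having zero total mass. Both $\widehat q_T$ and $q_T$ lie in the simplex: they are column-stochastic images of probability vectors, since the estimated operators are column-stochastic. Hence $w=\widehat q_T-q_T$ satisfies $\ones^\top w=0$, so the positive and negative parts of $w$ each have mass $\tfrac12\|w\|_1$. Every single coordinate therefore obeys $|w[a]|\le\tfrac12\|w\|_1$, which gives $\delta_R$. The same observation applies to $\widehat e_t:=\widehat p_t-p_t$ only where it is needed; the column-stochasticity of $\widehat M$ is used here and should be stated explicitly.

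For $\delta_C$, use the elementary inequality
\begin{equation*}
\left|\max_{t\in S}a_t-\max_{t\in S}b_t\right|\le\max_{t\in S}|a_t-b_t|
\end{equation*}
for finite $S$. Combining it with the per-time coordinate bound $\tfrac12\|\widehat v_t-v_t\|_1\le\tfrac12(\alpha_ve_t+\varepsilon_{v,\nu_t})$ completes the argument. Throughout, the bounds are understood relative to the common prescribed regime sequence and common open-loop inputs. This is what removes any input-difference or switching-difference term.
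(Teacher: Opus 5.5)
Your proposal is correct and follows essentially the same route as the paper. Inserting $\alpha_p\widehat W_{p,\nu_t}p_t$ gives the same decomposition as the paper's add-and-subtract of $\widehat M_{p,\nu_t}z_{p,t}$, and the remaining steps match: $\ell^1$ nonexpansiveness of the column-stochastic estimate, iteration, the zero-mass half-$\ell^1$ coordinate bound, and nonexpansiveness of the maximum, with you also spelling out the details the paper leaves implicit ($\|z_{p,t}\|_1=1$, and that $\widehat q_T,\widehat v_t$ are probability vectors).
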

\begin{proof}
For the state recursion, add and subtract
\begin{equation*}
\widehat M_{p,\nu_t}
[\alpha_pp_t^\top,(1-\alpha_p)u_t^\top]^\top.
\end{equation*}
Column-stochasticity makes $\widehat M_{p,\nu_t}$ nonexpansive in $\ell^1$, the common input components cancel, and Eq.~\eqref{eq:operator-bounds} bounds the remaining model discrepancy. The output inequalities follow identically. Iteration gives Eq.~\eqref{eq:state-error-sum}. Finally, the difference of two probability vectors has zero total mass, so every coordinate difference is bounded by one half of its $\ell^1$ norm; the maximum functional is nonexpansive in the uniform norm.
\end{proof}
The result is conditional on common inputs and regimes. Feedback, state-dependent regime disagreement, hitting events, and cumulative events require separate propagation arguments. Large coefficient-level bounds may also be conservative when observationally equivalent operators predict similarly on the excited domain. Tighter bounds valid only on a declared finite policy class remain sufficient for classification, but their scope must be reported.

\begin{proposition}[Acceptability margins]
\label{prop:acceptability-margins}
A policy is certified acceptable if
\begin{equation}
\widehat R_H(\pi)-\delta_R(\pi)\ge\alpha,\qquad
\widehat C_H(\pi)+\delta_C(\pi)\le\beta.
\end{equation}
It is certified unacceptable if $\widehat R_H(\pi)+\delta_R(\pi)<\alpha$ or $\widehat C_H(\pi)-\delta_C(\pi)>\beta$. Otherwise these bounds leave its status unresolved.
\end{proposition}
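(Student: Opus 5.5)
The argument works directly on the event bounds in Eq.~\eqref{eq:event-bounds}, taken as holding on their coverage event. The proposition makes a three-way classification, so I will prove the two certification claims separately. The unresolved case then needs no proof, since it is just the complement of the two.

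\emph{Certified acceptability.} Fix $\pi\in\Pi(\mathcal I)$ with $\widehat R_H(\pi)-\delta_R(\pi)\ge\alpha$ and $\widehat C_H(\pi)+\delta_C(\pi)\le\beta$. From $|R_H(\pi)-\widehat R_H(\pi)|\le\delta_R(\pi)$ I get the lower bound
\[
R_H(\pi)\ge\widehat R_H(\pi)-\delta_R(\pi)\ge\alpha .
\]
Similarly, $C_H(\pi)\le\widehat C_H(\pi)+\delta_C(\pi)\le\beta$. Both defining inequalities of Eq.~\eqref{eq:objective} therefore hold, so $\pi\in\mathcal A$. This is exactly the inclusion $\widehat{\mathcal A}_\delta\subseteq\mathcal A$ already noted after Eq.~\eqref{eq:certified-region}.

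\emph{Certified unacceptability.} Suppose first that $\widehat R_H(\pi)+\delta_R(\pi)<\alpha$. The upper half of the same bound gives $R_H(\pi)\le\widehat R_H(\pi)+\delta_R(\pi)<\alpha$, so the achievement constraint fails. Suppose instead that $\widehat C_H(\pi)-\delta_C(\pi)>\beta$. Then $C_H(\pi)\ge\widehat C_H(\pi)-\delta_C(\pi)>\beta$, so the cost constraint fails. Either failure is enough to give $\pi\notin\mathcal A$, because membership requires both constraints.

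\emph{Unresolved case and consistency.} For any remaining policy, the interval $[\widehat R_H-\delta_R,\widehat R_H+\delta_R]$ contains $\alpha$, or the cost interval contains $\beta$ in the corresponding sense. In that situation the bounds are compatible with both membership and non-membership, so no conclusion is claimed. I will also check that the two certified cases are disjoint. They cannot hold together, since $\delta_R,\delta_C\ge0$ would then force $\alpha\le\widehat R_H-\delta_R\le\widehat R_H+\delta_R<\alpha$ (and the analogous chain for $\beta$), which is a contradiction.

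There is no real obstacle in this argument. The only step needing care is the statement of scope: the conclusions hold on the event where Eq.~\eqref{eq:event-bounds} is valid. The bounds may come from Proposition~\ref{prop:open-loop-propagation}, where they are deterministic given Eq.~\eqref{eq:operator-bounds}, or from a simultaneous statistical guarantee. The componentwise extension to $\mathcal A_{\mathcal T}$ follows by applying the same two inequalities to each pair $(R_i,\delta_{R_i})$ and $(C_j,\delta_{C_j})$.
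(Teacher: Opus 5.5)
Your proof is correct and follows the same route as the paper: the two-sided bounds in Eq.~\eqref{eq:event-bounds} give $R_H(\pi)\ge\widehat R_H(\pi)-\delta_R(\pi)\ge\alpha$ and $C_H(\pi)\le\widehat C_H(\pi)+\delta_C(\pi)\le\beta$ for acceptance, and the opposite halves of those bounds violate a defining inequality of $\mathcal A$ for rejection. Your check that the acceptable and unacceptable cases are disjoint, and your explicit restriction to the event on which the bounds hold, are correct details that the paper's short proof leaves implicit.
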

\begin{proof}
The lower achievement bound and upper risk bound imply both defining inequalities of $\mathcal A$. Conversely, either stated rejection condition forces violation of at least one inequality. Equality at a threshold remains admissible.
\end{proof}
This elementary implication is conditional on valid error bounds, not a method for obtaining them. Statistical bounds certify the conclusion on their coverage event. Coverage uncertainty is distinct from operational risk $\beta$; data-dependent policy selection requires simultaneous coverage or independent validation of the selected policy. Representation and estimation errors must both be addressed.

Residual uncertainty is allowed. No universal entropy-minimization objective is imposed on $u,p,q,v$. Entropy alone does not distinguish desirable from undesirable categories: permuting their probabilities preserves entropy while changing target probability. Enrichment is useful here when it resolves an acceptability margin or enables a less costly acceptable policy, not merely when it concentrates a distribution.

\subsection{Negotiation regions and near-optimal reference sets}
Optimization remains useful as a reference. Let $z=(\pi,\xi)$ combine an implementable strategy and negotiable terms, such as resource budgets, participation commitments, and delivery windows~\cite{nash1950bargaining}. Let $\mathcal F$ encode nonnegotiable feasibility conditions, and let $\mathcal A_i$ be party $i$'s acceptance region, including achievement and cost-risk thresholds where appropriate. Define
\begin{equation}
\mathcal N=\mathcal F\cap\bigcap_{i=1}^{M}\mathcal A_i.
\label{eq:negotiation}
\end{equation}
For a declared common criterion $J$, a reference and near-optimal negotiation set are
\begin{align}
J^\star&=\sup_{z\in\mathcal N}J(z),\\
\mathcal N_\epsilon&=\{z\in\mathcal N:J(z)\ge J^\star-\epsilon\}.
\label{eq:nearoptimal}
\end{align}
Without an agreed scalar criterion, parties can instead compare multiple objectives and concessions~\cite{miettinen1999nonlinear}. Near-optimality does not itself relax feasibility or risk constraints. A neighborhood in objective value need not be a geometric ball in allocation space: nearby allocations may have different consequences, while distant allocations may have similar values. A surrogate estimates these regions; it does not establish real-world feasibility without appropriate validation.

For illustration, participation requirements $x_1+x_2\le0.50$ and $0.10\le x_2\le0.15$ permit the boundary segment $(x_1,x_2)=(0.38-\delta,0.12+\delta)$ for $-0.02\le\delta\le0.03$. The point $(0.38,0.12)$ is only a stipulated optimum until $J$ is specified. Value loss along this segment must be evaluated, not inferred from compositional distance.

Supplier participation can be negotiated over an operating window rather than at each order. For allocated volume $b_t$,
\begin{equation}
\bar u_{i,H}=\frac{\sum_{t=0}^{H-1}b_tu_{i,t}}{\sum_{t=0}^{H-1}b_t},
\qquad \sum_t b_t>0.
\end{equation}
Expected randomized participation, realized volume share, and expenditure share are distinct. Personal resource negotiation similarly concerns task-specific preparation under time and recovery constraints, rather than maximal physical performance. Preparation and successful task completion remain different events.

When only the fitted criterion $\widehat J$ is used for ranking, define
\begin{equation}
\widehat{\mathcal N}^{\rm model}_{\epsilon,\delta}
=\left\{z\in\widehat{\mathcal A}_\delta:
\widehat J(z)\ge
\sup_{z'\in\widehat{\mathcal A}_\delta}\widehat J(z')-\epsilon\right\}.
\label{eq:model-nearoptimal}
\end{equation}
Every member is certified acceptable, but the near-optimal ranking is model-relative and compares only policies already in $\widehat{\mathcal A}_\delta$.

\begin{proposition}[Finite-class $\epsilon$-optimality certificate]
\label{prop:finite-epsilon-certificate}
Consider a finite structurally feasible candidate class with simultaneous objective intervals $[L_J(z),U_J(z)]$. Let $\mathcal P$ contain all candidates not certified unacceptable and set
\begin{equation}
U^\star=\max_{z\in\mathcal P}U_J(z).
\end{equation}
On the simultaneous-coverage event, any certified-acceptable candidate satisfying
\begin{equation}
U^\star-L_J(z)\le\epsilon
\label{eq:strong-nearoptimal}
\end{equation}
is $\epsilon$-optimal relative to all truly acceptable candidates in the finite class.
\end{proposition}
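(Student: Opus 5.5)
The argument works entirely on the simultaneous-coverage event. There, every candidate $z$ in the finite class satisfies $L_J(z)\le J(z)\le U_J(z)$. We also assume the acceptability margins of Proposition~\ref{prop:acceptability-margins} are valid for all candidates at once, so that certified (un)acceptability is a true statement. If the margins come from a separate bound, this requires the coverage event to be read as a joint event for the objective intervals and the $\delta$-bounds, and this scope should be stated explicitly. Let $\mathcal A_{\rm fin}$ be the set of truly acceptable candidates in the finite class, and set $J^\star_{\rm fin}=\max_{z'\in\mathcal A_{\rm fin}}J(z')$. This is a maximum because the class is finite. If $\mathcal A_{\rm fin}$ is empty the claim is vacuous, but a certified-acceptable $z$ is itself truly acceptable, so $\mathcal A_{\rm fin}$ is nonempty whenever the hypothesis applies.

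The proof has three short steps. First, we show $\mathcal A_{\rm fin}\subseteq\mathcal P$. A candidate certified unacceptable violates a defining inequality of $\mathcal A$ by Proposition~\ref{prop:acceptability-margins}, so no truly acceptable candidate is excluded from $\mathcal P$. Second, for every $z'\in\mathcal A_{\rm fin}$ we have $J(z')\le U_J(z')\le U^\star$, and hence $J^\star_{\rm fin}\le U^\star$. Third, let $z$ be certified acceptable and satisfy Eq.~\eqref{eq:strong-nearoptimal}. Then $z\in\mathcal A_{\rm fin}$ by Proposition~\ref{prop:acceptability-margins}. Combining the bounds gives
\begin{equation*}
J(z)\ge L_J(z)\ge U^\star-\epsilon\ge J^\star_{\rm fin}-\epsilon .
\end{equation*}
So $z$ is truly acceptable and within $\epsilon$ of the best truly acceptable value, which is the asserted $\epsilon$-optimality.

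The main point of care is the first step. The maximum defining $U^\star$ must range over $\mathcal P$, which includes unresolved candidates, and not merely over certified-acceptable ones. Using only certified-acceptable candidates could miss a truly acceptable but unresolved candidate with a larger objective value. I would also note two further conditions. Simultaneity of the coverage is essential, because $z$ is selected in a data-dependent way. And structural feasibility is needed so that every candidate lies in the class over which true acceptability is defined.
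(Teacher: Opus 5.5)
Your proof is correct and follows essentially the same argument as the paper: truly acceptable candidates are never certified unacceptable, so they lie in $\mathcal P$ and have objective at most $U^\star$, while the selected candidate has objective at least $L_J(z)\ge U^\star-\epsilon$. Your remark that the coverage event must also make the $\delta$-margins valid states explicitly an assumption that the paper's proof uses implicitly.
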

\begin{proof}
Every truly acceptable candidate remains in $\mathcal P$, so its true objective is at most $U^\star$. The candidate in Eq.~\eqref{eq:strong-nearoptimal} has true objective at least $L_J(z)\); hence its loss relative to the best truly acceptable candidate is at most $U^\star-L_J(z)\le\epsilon$.
\end{proof}
For $J=R_H$, the objective intervals are $[L_R,U_R]$. The certificate is deliberately conservative because unresolved but truly unacceptable candidates may raise $U^\star$. It is not an optimality claim over policies outside the declared finite class.

\subsection{Stochastic feedback closure and certified feedback feasibility}
The primary prototype in Eq.~\eqref{eq:switched} allows $u_t$ to be selected by a general implementable policy. This subsection identifies a distinguished subclass in which the same allocation variable is generated by stochastic feedback from the controller-accessible localization $\bar p_t$. In the fully observed case $\bar p_t=p_t$; under partial observation, $\bar p_t$ must be supplied by the declared estimator in Eq.~\eqref{eq:causal-order}.

Let $\eth_{s,2}$ denote the reduced stochastic embedding of degree at most two introduced in Ref.~\cite{vides2026localization} (Eqs.~16--20 there), with $s=(s_1,s_2,s_3)\in\{0,1\}^3$ selecting the linear block, the quadratic block, and a constant term respectively. The SSRC reduction matrix $R_{s,p}(n)$ aggregates permutation-equivalent coordinates of $x^{\otimes k}$ for every selected order $k\le p$; thus it removes the repeated tensor words before identification. In general, the resulting ambient dimension is
\begin{equation}
d_{s,p}=s_{p+1}+\sum_{k=1}^{p}s_k{n+k-1\choose k},
\label{eq:reduced-embedding-dimension}
\end{equation}
and for $s=(1,1,0)$ it is $d_s=n+{n+1\choose2}$. Write $\eth_2:=\eth_{s,2}$ for this typical choice.

This tensor reduction does not remove algebraic relations created by restricting several degree blocks to the simplex. For example, $\sum_jp_j=1$ implies
\begin{equation}
p_i=p_i^2+\sum_{j\ne i}p_ip_j,
\label{eq:simplex-cross-degree}
\end{equation}
with the coefficients adjusted to the multiplicity scaling used by $R_{s,2}(n)$. More generally, a monomial of degree $k<p$ can be homogenized on the simplex by multiplication by $(\sum_i p_i)^{p-k}=1$. Consequently, if the complete degree-$p$ block is active, the functional span of all selected blocks on $\Delta^{n-1}$ has dimension at most ${n+p-1\choose p}$ even when the reduced ambient dimension in Eq.~\eqref{eq:reduced-embedding-dimension} is larger. These are domain-induced cross-degree relations, not tensor-product redundancies.

For each regime $\nu$, let $K_\nu\in\mathbb S_{m,d_s}(\mathbb R)$ be column-stochastic, and define
\begin{equation}
u_t = K_{\nu_t}\eth_2(\bar p_t).
\label{eq:feedback}
\end{equation}
Substituting Eq.~\eqref{eq:feedback} in Eq.~\eqref{eq:switched} gives the closed-loop switched model
\begin{equation}
p_{t+1} = \alpha_p W_{p,\nu_t} p_t + (1-\alpha_p) V_{p,\nu_t} K_{\nu_t}\eth_2(\bar p_t),
\label{eq:closedloop}
\end{equation}
and analogously for $q_t,v_t$, since $u_t$ enters through the same $V_{j,\nu}$ block in each equation.

\begin{proposition}[Stochastic closure]
If $\bar p_t\in\Delta^{n-1}$, $\eth_2:\Delta^{n-1}\to\Delta^{d_s-1}$ is stochastic, and $K_\nu\in\mathbb S_{m,d_s}(\mathbb R)$ for every $\nu$, then $u_t\in\Delta^{m-1}$. In the fully observed case $\bar p_t=p_t$, the closed-loop recursion above is a well-posed stochastic recursion on $\Delta^{n-1}$, with $q_t,v_t$ likewise remaining in their respective simplices. Under partial observation, the same conclusion holds for the plant state when the recursion is coupled to a declared filter that returns $\bar p_t\in\Delta^{n-1}$.
\end{proposition}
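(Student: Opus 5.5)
The plan rests on one elementary fact, used repeatedly: a column-stochastic matrix maps probability vectors to probability vectors, and a convex combination of probability vectors is again a probability vector. For $A\in\mathbb S_{r,d}(\R)$ and $z\in\D^{d-1}$, nonnegativity of $A$ and $z$ gives $Az\ge0$. Column-stochasticity, $\ones^\top A=\ones^\top$, gives $\ones^\top Az=\ones^\top z=1$. So $A$ maps $\D^{d-1}$ into $\D^{r-1}$. Similarly, for $a,b\in\D^{r-1}$ and $\lambda\in[0,1]$, the vector $\lambda a+(1-\lambda)b$ is nonnegative with unit mass.

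\textbf{Step 1 (the input).} By hypothesis $\bar p_t\in\D^{n-1}$ and $\eth_2$ is stochastic, so $\eth_2(\bar p_t)\in\D^{d_s-1}$. Since $K_{\nu_t}\in\mathbb S_{m,d_s}(\R)$, the basic fact gives $u_t=K_{\nu_t}\eth_2(\bar p_t)\in\D^{m-1}$. The regime $\nu_t$ is fixed before the action by the causal order in Eq.~\eqref{eq:causal-order}, so $K_{\nu_t}$ is a well-defined member of the declared finite family. This step uses neither the internal structure of $\eth_2$ nor the simplex relations of Eq.~\eqref{eq:simplex-cross-degree}; stochasticity of $\eth_2$ is simply taken as given. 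If one wished to verify it, the natural route is to use the multiplicity-weighted reduced monomials, whose degree-$k$ block sums to $(\ones^\top p)^k=1$ on the simplex, with the block weights normalized.

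\textbf{Step 2 (fully observed case, by induction).} Assume $p_t\in\D^{n-1}$. Then $\bar p_t=p_t$, and Step 1 gives $u_t\in\D^{m-1}$. The stacked vector $z_{p,t}=[\alpha_p p_t^\top,(1-\alpha_p)u_t^\top]^\top$ is nonnegative with mass $\alpha_p+(1-\alpha_p)=1$. Since $M_{p,\nu_t}=[W_{p,\nu_t}\ V_{p,\nu_t}]$ is column-stochastic, $p_{t+1}=M_{p,\nu_t}z_{p,t}\in\D^{n-1}$. Equivalently, $p_{t+1}$ is a convex combination of $W_{p,\nu_t}p_t$ and $V_{p,\nu_t}u_t$. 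The same argument with $M_{q,\nu_t}$ and $M_{v,\nu_t}$, using the contemporaneous timing of Eq.~\eqref{eq:switched}, gives $q_t\in\D^{n_q-1}$ and $v_t\in\D^{s-1}$. Induction from $p_0\in\D^{n-1}$ gives the whole trajectory. Well-posedness also requires that each update is a single-valued function of the current state. This holds because $\nu_t=g(p_t,\zeta_t)$ is determined before $u_t$, and any action-driven regime change is assigned to $\nu_{t+1}$, so there is no circular definition.

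\textbf{Step 3 (partial observation).} Here the induction runs on the plant state $p_t$ alone. The declared filter $\Phi_t$ returns $\bar p_t\in\D^{n-1}$ by hypothesis, so Step 1 applies unchanged to give $u_t\in\D^{m-1}$. Step 2 then applies verbatim to $p_{t+1}$, $q_t$ and $v_t$, because it used only $p_t\in\D^{n-1}$ and $u_t\in\D^{m-1}$, not the equality $\bar p_t=p_t$.

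\textbf{Main obstacle.} The only delicate point is this last one: the coupled plant--filter system must be stated carefully. The proposition claims invariance for the plant state only; the filter's own well-posedness is an assumption, not a conclusion. The proof must also ensure that the policy never uses the latent $p_t$, consistent with the oracle-policy remark earlier in the section.
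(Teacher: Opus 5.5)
Your proposal is correct and follows the paper's route. Nonnegativity together with $\ones^\top K_{\nu_t}=\ones^\top$ puts $u_t$ in $\D^{m-1}$, and $p_{t+1},q_t,v_t$ are then images of simplex vectors under the column-stochastic blocks $M_{j,\nu_t}$, i.e., convex combinations of $W_{j,\nu_t}p_t$ and $V_{j,\nu_t}u_t$. Your explicit induction from $p_0\in\D^{n-1}$, and your observation that Step~2 never uses $\bar p_t=p_t$, spell out what the paper's short proof leaves implicit for the fully and partially observed cases.
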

\begin{proof}
$\ones^\top u_t = \ones^\top K_{\nu_t}\eth_2(\bar p_t) = \ones^\top \eth_2(\bar p_t) = 1$, and $u_t\ge 0$ since $K_{\nu_t}$ and $\eth_2(\bar p_t)$ are nonnegative. The claim follows because $p_{t+1}$ is a convex combination of stochastic vectors under column-stochastic maps. In the partially observed case, the declared filtering recursion supplies the additional state needed for causal propagation.
\end{proof}

This substitution invalidates the cancellation used in Proposition~\ref{prop:common-schedule-forgetting}, since $u_t$ now depends on the trajectory through $\bar p_t$ rather than being common and externally fixed; this is exactly the feedback case excluded from that cancellation.

The substitution identifies a policy subclass
\begin{multline}
\Pi_K^{(s)}(\mathcal I) := \{\pi_K : u_t = K_{\nu_t}\eth_2(\bar p_t),\\
K_\nu \in \mathbb S_{m,d_s}(\mathbb R)\ \forall \nu\} \subset \Pi(\mathcal I),
\end{multline}
of regime-indexed, graph-structured stochastic quadratic feedback policies. $\Pi_K^{(s)}$ is not asserted to exhaust $\Pi(\mathcal I)$; it is an additional implementable class that can be tested for membership in $\mathcal A$ alongside the constant-allocation candidates already considered.

For any policy with the event-probability bounds in Eq.~\eqref{eq:event-bounds}, define its conservative acceptability slack by
\begin{equation}
\rho(\pi)=\min\left\{
\widehat R_H(\pi)-\delta_R(\pi)-\alpha,\;
\beta-\widehat C_H(\pi)-\delta_C(\pi)
\right\}.
\label{eq:certified-slack}
\end{equation}
Thus $\rho(\pi)\ge0$ is equivalent to the acceptance certificate in Proposition~\ref{prop:acceptability-margins}. For a declared feedback family $\mathcal C_K\subseteq\Pi_K^{(s)}(\mathcal I)$, write
\begin{equation}
\rho_{\mathcal C_K}^{\star}=\sup_{\pi_K\in\mathcal C_K}\rho(\pi_K),
\label{eq:feedback-feasibility-value}
\end{equation}
using a maximum when $\mathcal C_K$ is finite.

\begin{definition}[$\eta$-certified feedback feasibility]
For fixed information, events, horizon, thresholds, error bounds, and a declared feedback family $\mathcal C_K\subseteq\Pi_K^{(s)}(\mathcal I)$, the family is $\eta$-certifiably feasible, for $\eta\ge0$, if $\rho_{\mathcal C_K}^{\star}\ge\eta$; equivalently, some $\pi_K\in\mathcal C_K$ satisfies $\rho(\pi_K)\ge\eta$.
\end{definition}

Here $\eta$ is a required nonnegative robustness reserve in probability units, not a permitted constraint violation. The case $\eta=0$ recovers ordinary certified acceptability. This notion is deliberately family-relative: failure for $\mathcal C_K$ says nothing about $\Pi(\mathcal I)$ or even about untested members of $\Pi_K^{(s)}(\mathcal I)$, consistent with the target-relative treatment of decision sufficiency in Definition~\ref{def:decision-sufficiency}. A vector of constraint-specific reserves can replace the scalar minimum when return and cost margins should not be treated symmetrically. This definition concerns certified feasibility, not state- or distribution-reachability controllability.

\textbf{Identification of $K_\nu$.} Two complementary routes are available, and the framework does not privilege one over the other.

\textit{(i) Estimation from data.} Given observed pairs $(p_t,u_t)$ with $\nu_t=\nu$, set $z_t=\eth_2(p_t)$, $\mathcal T_\nu=\{t:\nu_t=\nu\}$, $N_\nu=|\mathcal T_\nu|$, and form the data matrices
\begin{equation}
U_\nu=[u_t]_{t\in\mathcal T_\nu},\qquad
Z_\nu=[z_t]_{t\in\mathcal T_\nu}.
\end{equation}
Let $\mathcal K_\nu=(\operatorname{span}\mathcal B_{S,\nu}(m,d_s))\cap\mathbb S_{m,d_s}(\mathbb R)$ denote the column-stochastic matrices supported by the regime-specific relational graph $G_{S,\nu}$. The graph-constrained least-squares estimate is
\begin{align}
\widehat K_\nu&\in\arg\min_{K\in\mathcal K_\nu}
\frac{1}{N_\nu}\|U_\nu-KZ_\nu\|_F^2 \notag\\
&=\arg\min_{K\in\mathcal K_\nu}
\frac{1}{N_\nu}\sum_{t\in\mathcal T_\nu}\|u_t-Kz_t\|_2^2.
\label{eq:feedback-identification}
\end{align}
Thus residuals are penalized sample by sample. The norm of their sum is not used, since positive and negative residuals from different observations could cancel. The constrained problem can be assembled with the same graph-supported design used in Sec.~II.A, with nonnegativity and unit column sums imposed on $K$. Unequal-precision observations can instead use a declared weighted sum of squared residuals; allocation counts may be fitted by an appropriate multinomial likelihood.

No equality constraint is imposed across regimes: $\widehat K_\nu$ is estimated independently for each $\nu$, and a complete $G_{S,\nu}$ recovers the dense feasible class. Unique recovery requires the sampled reduced embedding to distinguish all feasible coefficient directions. In particular, the specialization of Proposition~\ref{prop:restricted-identifiability} is
\begin{equation}
DZ_\nu=0,\quad D\in\operatorname{par}(\mathcal K_\nu)
\quad\Longrightarrow\quad D=0,
\label{eq:feedback-identifiability}
\end{equation}
where $\operatorname{par}(\mathcal K_\nu)$ is the linear space of support-admissible, zero-column-sum differences. If this condition fails, distinct feasible controllers can be observationally equivalent on the training design; held-out functional prediction may still be accurate on a declared domain. Even arbitrarily rich sampling of the simplex cannot distinguish a direction that annihilates the entire set $\eth_2(\Delta^{n-1})$. For $n=4$ and $s=(1,1,0)$, tensor reduction has already reduced the embedding from $4+4^2=20$ to $4+10=14$ coordinates, while the four independent relations in Eq.~\eqref{eq:simplex-cross-degree} leave a $10$-dimensional functional span. Thus a dense $K_\nu\in\mathbb S_{m,14}(\mathbb R)$ is generally identifiable only up to its action on that embedded domain unless further structure selects a realization. Approximate persistence across regimes, when observed, is therefore an empirical finding rather than an identifying restriction.

When a lower-dimensional parametrization $K_\nu=\mathcal L_\nu(\theta_\nu)$ is declared, estimation should be performed in that parametrization rather than in an ambient realization containing domain-induced null directions. In the personal laboratory below, $\theta_\nu=A_\nu\in\mathbb S_{m,n}(\mathbb R)$ and the canonical lifting satisfies $\mathcal L_\nu(A_\nu)\eth_2(p)=A_\nu p$. Consequently, with $P_\nu=[p_t]_{t\in\mathcal T_\nu}$, the implemented exact-data fit is the special case
\begin{equation}
\widehat A_\nu\in\arg\min_{A\in\mathbb S_{m,n}(\mathbb R)}
\frac{1}{N_\nu}\|U_\nu-AP_\nu\|_F^2,
\qquad \widehat K_\nu=\mathcal L_\nu(\widehat A_\nu).
\label{eq:minimal-feedback-identification}
\end{equation}
The full-row-rank design used there makes the unconstrained least-squares solution unique; because the data are generated exactly by a feasible stochastic $A_\nu$, it coincides with the constrained solution up to floating-point cleanup.

\textit{(ii) Supplied candidates.} $K_\nu$ may equally be elicited from a Bayesian prior over column-stochastic matrices or supplied from declared operator experience. The certification stage (Proposition~\ref{prop:acceptability-margins}, instantiated with $\Pi_K^{(s)}$) treats any such candidate identically to an estimated one, requiring only that $K_\nu$ be column-stochastic and, when a structural restriction is declared, graph-admissible.

Residual uncertainty from route (i) and lack of certified feasibility within a declared feedback family are distinct conclusions and should not be conflated when reporting $\rho(\pi_K)$ or $\rho_{\mathcal C_K}^{\star}$.

More generally, two controllers $K_\nu,K_\nu'$ that are far apart as column-stochastic matrices may induce similar, or jointly admissible, localization trajectories on a declared subset $\mathcal T$ of times and coordinates. This disassociation between distance in policy and distance in outcome, already noted for constant allocations in Sec.~II.C, extends naturally to $\Pi_K^{(s)}$; formalizing an outcome-level equivalence over $\mathcal T$ and a corresponding minimal-complexity selection criterion among certified candidates is left as an open direction.

\section{Identification and synthesis algorithm}
Algorithm~\ref{alg:ids} joins the two computational stages without conflating their purposes. The first estimates regime-local stochastic operators and checks whether the excited data support prediction at the resolution needed for decision classification. The second propagates candidate strategies, applies conservative return--cost margins, and constructs the negotiable set. Consequently, failure to certify a strategy is an admissible output rather than an instruction to relax the declared thresholds.

\begin{algorithm2e*}[t]
\small
\caption{Regime-wise identification and negotiable IDS synthesis}
\label{alg:ids}
\KwIn{Data $\mathcal D=\{(p_t,u_t,p_{t+1},q_t,v_t,\nu_t)\}$; supports $G_{j,\nu}$; weights $\alpha_j$; policy class $\Pi$; horizon $H$; declared posture $h$ with thresholds $(r_{\min}^{(h)},c_{\max}^{(h)})$; tolerance $\epsilon$.}
\KwOut{Identified operators and diagnostics; nominal, certified, unresolved, model-relative near-optimal, and, when available, finite-class $\epsilon$-optimal-certified regions; an implementable reference.}
\tcp{Stage I: identify and validate regime-local dynamics}
\ForEach{regime $\nu$ and output $j\in\{p,q,v\}$}{
  Form $z_{j,t}=[\alpha_jp_t^\top,(1-\alpha_j)u_t^\top]^\top$ for samples with $\nu_t=\nu$\;
  Fit $\widehat M_{j,\nu}=[\widehat W_{j,\nu}\ \widehat V_{j,\nu}]$ by graph-supported nonnegative least squares with unit column sums\;
  Record held-out error, constrained design rank, and conditioning\;
}
\If{validation is inadequate near a decision boundary}{
  \Return{unresolved representation; revise data, regimes, graph, or enrichment}\;
}
\tcp{Stage II: synthesize a negotiable IDS}
\ForEach{implementable $\pi\in\Pi$}{
  Propagate $\widehat p_t,\widehat q_t,\widehat v_t$ and estimate $(\widehat R_H,\widehat C_H,\delta_R,\delta_C)$\;
}
$\widehat{\mathcal A}_0^{(h)}\leftarrow\{\pi:\widehat R_H(\pi)\ge r_{\min}^{(h)},\ \widehat C_H(\pi)\le c_{\max}^{(h)}\}$\;
$\widehat{\mathcal A}_\delta^{(h)}\leftarrow\{\pi:\widehat R_H(\pi)-\delta_R\ge r_{\min}^{(h)},\ \widehat C_H(\pi)+\delta_C\le c_{\max}^{(h)}\}$\;
Classify remaining candidates as certified unacceptable or unresolved using Proposition~\ref{prop:acceptability-margins}\;
\If{$\widehat{\mathcal A}_\delta^{(h)}=\varnothing$}{
  \Return{no IDS certified under the declared information and constraints}\;
}
$\widehat{\mathcal N}^{\rm model}_{\epsilon,\delta}\leftarrow
\{\pi\in\widehat{\mathcal A}_\delta^{(h)}:
\widehat J(\pi)\ge\sup_{\pi'\in\widehat{\mathcal A}_\delta^{(h)}}\widehat J(\pi')-\epsilon\}$\;
\If{$\Pi$ is finite and simultaneous objective intervals are available}{
  $\mathcal P\leftarrow\{\pi:\pi\text{ is not certified unacceptable}\}$\;
  $U^\star\leftarrow\max_{\pi\in\mathcal P}U_J(\pi)$\;
  $\widehat{\mathcal N}^{\rm cert}_{\epsilon,\delta}\leftarrow
  \{\pi\in\widehat{\mathcal A}_\delta^{(h)}:U^\star-L_J(\pi)\le\epsilon\}$\;
}
\Return{operators, diagnostics, classification regions, both near-optimal sets when defined, and a certified reference}\;
\end{algorithm2e*}

The propagation step may use marginal $q_t,v_t$ for pointwise criteria or an augmented joint model for hitting and cumulative events. Priors can regularize the identification stage, but they do not remove observational equivalence created by insufficient excitation. The model-relative near-optimal calculation is performed only after structural feasibility and conservative acceptability have been imposed. The stronger finite-class certificate additionally uses Proposition~\ref{prop:finite-epsilon-certificate}; it may be empty even when certified acceptable references exist.

Algorithm~1 separates five conclusions that should not be conflated. A successful stochastic fit establishes a predictive representation only on the excited and validated domain. Membership in $\widehat{\mathcal A}_0^{(h)}$ establishes nominal acceptability under the fitted model, whereas a nonempty $\widehat{\mathcal A}_\delta^{(h)}$ establishes that at least one candidate is certified under the declared information, posture, policy class, horizon, and margins. Candidates between the acceptance and rejection margins remain unresolved. The set $\widehat{\mathcal N}^{\rm model}_{\epsilon,\delta}$ ranks certified candidates using the fitted criterion, while $\widehat{\mathcal N}^{\rm cert}_{\epsilon,\delta}$ supplies the stronger finite-class guarantee when objective intervals permit it. Neither repairs an inadequate representation or converts an unacceptable policy into an acceptable one. This ordering makes the returned diagnostics part of the decision result rather than ancillary numerical output.

The two early returns also have different meanings. Failure in Stage I leaves the decision unresolved because the available representation cannot support classification near the relevant boundary. Failure in Stage II is stronger but still conditional: the tested policy class contains no certified candidate under the present constraints. In practice, the former calls for new excitation, observations, regime labels, graph revision, or state enrichment. The latter calls for revising the implementable policy class, resource limits, or negotiated thresholds only when such a revision is substantively justified. Keeping both outcomes explicit prevents an optimizer from silently compensating for an identification failure.

\section{Scope of switching and enrichment}
The regime index and the enrichment index serve different purposes. Switching selects among locally affine stochastic maps as operating conditions change; in the fully observed case, pre-action state-dependent switching produces the piecewise-affine nonlinear evolution described above. Enrichment refines a state when residence history or additional observations reveal decision-relevant internal levels. It should be retained only when the refinement changes prediction near an acceptance boundary or permits a less costly acceptable strategy. Spectral and pseudospectral diagnostics can support that assessment, but they do not by themselves establish predictive closure or decision sufficiency~\cite{vides2026localization}.

Operationally, the current regime may be prescribed, inferred before control selection, or generated by an observed pre-action state--context rule. An action-dependent rule instead updates the next regime, as specified by the causal convention in Eq.~\eqref{eq:causal-order}. These cases produce the same conditional propagation once $\nu_t$ is fixed, but they carry different uncertainty. If regime classification is estimated, its error must enter validation or the acceptance margins; it cannot be absorbed into the local operator error without justification. Similarly, enrichment should be evaluated under the same downstream criterion used for the decision. A refinement that changes eigenvalue or pseudospectral diagnostics but leaves every relevant return and cost classification unchanged has diagnostic value, yet supplies no decision advantage in the declared problem.

\section{Reproducible numerical laboratories}
All parameters below are synthetic. No physiological or commercial calibration is asserted. The known-matrix calibration and the three application prototypes use distinct parameterizations and outcome definitions; their absolute scores are therefore not directly comparable. Replication bars and boxes report descriptive medians and quartiles unless stated otherwise; they are not guarantees about model validity.

\subsection{Known switched matrices and SSRC calibration}
The new primary calibration instantiates Eq.~\eqref{eq:switched} with four internal states, three inputs, three return categories, three cost categories, and two observed regimes. Mixing coefficients are $(0.65,0.70,0.80)$. Sparse graph supports and all true matrices are provided in the reproducibility package. For each regime, 240 independent probability-state/input pairs are generated by controlled initialization; input probabilities are exact. Output counts use ensembles of 2000. Thus this experiment tests output sampling noise, not errors in measured inputs or unknown regime classification.

Using noiseless outputs, the maximum induced-$\ell^1$ matrix error is about $10^{-14}$ for the declared sparse supports. With multinomial outputs, held-out mean probability errors in the reported fit range from approximately 0.0021 to 0.0038. Twenty independent noisy fits probe variability under this same design, as shown in Fig.~\ref{fig:ssrc}. These observations support numerical adequacy for this example; the normal-equation construction and its conditioning do not justify a universal stability claim.

For control, fix a 12-step schedule alternating regimes in blocks of three and optimize constant allocations subject to $u_1+u_2\le0.50$, $0.10\le u_2\le0.15$. The target criterion is the last contemporaneous return probability; risk is the maximum of the per-step undesirable-cost marginal, not a probability of ever encountering that cost. Thresholds are 0.25 and 0.35. The response is affine in constant $u$, allowing exact polygon clipping and vertex optimization rather than a sampled allocation grid.

Known and identified models select $(0.35,0.15,0.50)$. The identified model predicts target probability 0.35660; evaluation under the known model gives 0.35699 and maximum pointwise cost probability 0.16606. With $\epsilon=0.02$, the entire feasible allocation polygon is nominally near-optimal in both models. Because the generator is known, this is an ex post model comparison, not a statistical certificate over the continuous polygon; obtaining simultaneous data-derived margins for that class is left open. This broad negotiation region is a result of this example's objective variation and thresholds, not an assumed neighborhood around an isolated optimum.

\subsection{Reproducible personal and inventory prototypes}
The first compact prototype uses four preparation states (low, developing, ready, and fatigued), three resource allocations (recovery, sustainable training, and demanding training), and ordinary or assessment regimes. Synthetic collective frequencies emulate summaries that could be constructed from personal records, wearable-derived categories, or a suitably modeled community prior. Such pooled evidence is informative only under an explicit transportability assumption; the example does not treat people as automatically exchangeable.

The second prototype uses four inventory states (healthy, normal, low, and pending/delayed), three allocations (flexible supplier, reliable supplier, and expedite/buffer), and ordinary or stressed demand~\cite{zipkin2000inventory}. Supplier-delay information is represented at this coarse resolution; delivery-age enrichment remains appropriate when elapsed time changes the arrival hazard enough to alter the negotiated choice.

Both laboratories generate 260 state--input pairs per regime and output frequencies from ensembles of 2500 cases. Graph-supported nonnegative least squares imposes unit column sums as exact constraints. The constrained diagnostics apply Eq.~\eqref{eq:restricted-design}: personal-state maps have zero restricted nullity, while their return and cost maps have nullities one and two; inventory-state maps likewise have zero restricted nullity, while their return and cost maps have nullity two in each regime. Nevertheless, held-out mean $\ell^1$ prediction errors range from 0.0027 to 0.0060. This distinction matters: accurate predictions on the excited simplex do not establish unique recovery of every admissible edge weight. Direct worst-case bounds formed from full induced matrix errors are consequently too conservative here and classify every selected policy as unresolved.

To expose the certification logic without disguising that limitation, the synthetic generator is used only for an oracle diagnostic on each declared finite candidate class. The simultaneous discrepancies
\begin{equation}
\delta_R=\max_{\pi\in\Pi_f}|R_H(\pi)-\widehat R_H(\pi)|,\qquad
\delta_C=\max_{\pi\in\Pi_f}|C_H(\pi)-\widehat C_H(\pi)|
\end{equation}
are exact over the enumerated class $\Pi_f$. They certify the reported finite-class classifications deterministically for these synthetic experiments, but are not estimable guarantees for an applied data set.

Personal allocations satisfy sustainable-training shares in $[0.25,0.55]$ and demanding-training shares no larger than 0.25. Acceptance requires terminal high-success probability at least 0.25 and maximum pointwise unsustainable-burden probability at most 0.18. The finite-class oracle discrepancies are $\delta_R=0.001240$ and $\delta_C=0.001116$. Of 195 grid-feasible allocations, 44 are nominally acceptable, 39 are certified acceptable, 10 are unresolved, and 146 are certified unacceptable. Among the certified policies, 23 lie within $\epsilon=0.02$ of the best fitted return; 18 also satisfy the stronger finite-class certificate in Proposition~\ref{prop:finite-epsilon-certificate}. The selected certified reference is $(0.28,0.54,0.18)$ for recovery, sustainable training, and demanding training. Its fitted terminal target probability is 0.27783 and its conservative lower bound is 0.27659; its fitted maximum burden is 0.17703 and its conservative upper bound is 0.17815. Its limiting constraint is cost, giving $\rho=0.00185$.

Inventory allocations require the flexible and reliable shares to total at most 0.60, with reliable participation in $[0.15,0.30]$. Both declared postures require terminal high-service probability at least 0.48 and use $\epsilon=0.015$; the prudential posture sets the maximum pointwise unsustainable-cost probability at 0.22, whereas the continuity-operational posture uses 0.24. The posture is selected before optimization from the business context; the second ceiling is not introduced after a failed certificate. Here $\delta_R=0.003102$ and $\delta_C=0.001327$.

Under the prudential posture, of 616 feasible allocations, 50 are nominally acceptable, 28 are certified acceptable, 53 are unresolved, and 535 are certified unacceptable. Twenty-five certified policies are model-relative $\epsilon$-near-optimal, while one satisfies the stronger finite-class $\epsilon$-optimality certificate. The nominal maximizer $(0.14,0.30,0.56)$ is unresolved because its conservative cost upper bound is 0.2203. The selected certified reference is $(0.15,0.30,0.55)$, with fitted terminal high-service probability 0.49897, conservative lower bound 0.49586, fitted maximum cost 0.21778, and conservative upper bound 0.21910. Its cost-limited certified slack is $\rho=0.00090$.

Under the continuity-operational posture, 303 allocations are nominally acceptable, 275 are certified acceptable, 57 are unresolved, and 284 are certified unacceptable. Thirty-one certified policies are model-relative $\epsilon$-near-optimal and ten satisfy the stronger certificate. The selected operational reference is $(0,0.28,0.72)$, with fitted terminal high-service probability 0.55129, conservative lower bound 0.54819, fitted maximum cost 0.23832, and conservative upper bound 0.23965. Its certified slack is $\rho=0.00035$. Thus a predeclared willingness to tolerate two additional probability points of pointwise cost risk permits a materially more active high-service allocation while retaining a positive certificate. These thresholds are illustrative preferences and the cost constraints concern marginal probabilities, not pathwise hitting events.

Figure~\ref{fig:didactic} is the central numerical synthesis. Panel (a) distinguishes acceptability certification, model-relative near-optimality, and the stronger finite-class certificate in the personal laboratory. Panel (b) connects the selected personal allocation to its time-dependent return and burden localizations. Panel (c) contrasts the prudential and continuity-operational inventory postures, including their different certified references and stronger $\epsilon$-optimal subsets. Panel (d) maps both references back to implementable supplier shares. Together, the panels display the proposed IDS chain from stochastic representation to a context-declared negotiable decision under residual uncertainty.

\begin{figure*}[t]
\centering
\includegraphics[width=\textwidth]{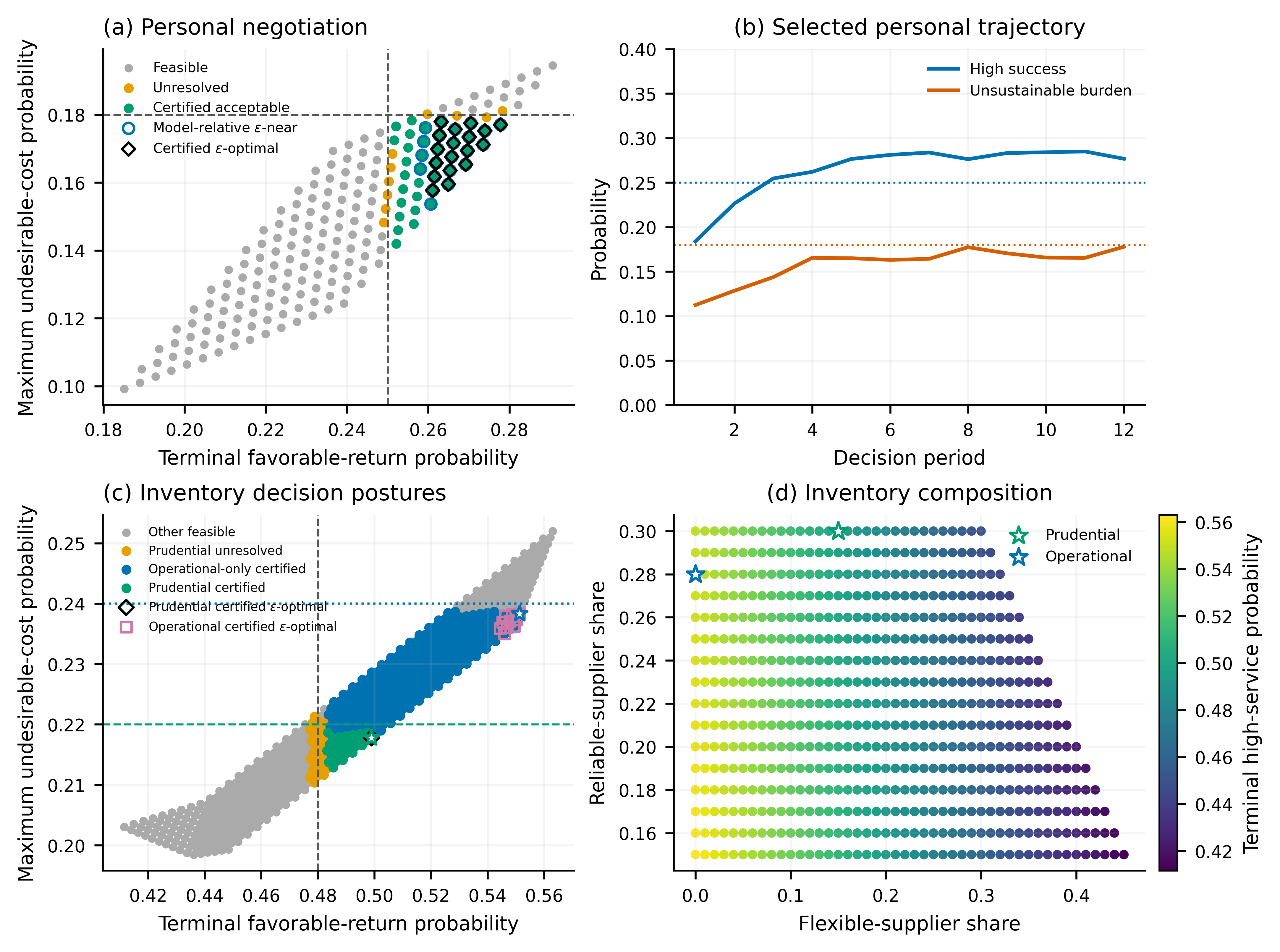}
\caption{Reproducible compact IDS prototypes with finite-class oracle bounds. (a) Personal return--burden classification; open blue circles show model-relative near-optimal policies and open black diamonds show the stronger certified $\epsilon$-optimal subset. (b) Trajectories under the selected certified personal allocation; dotted lines show its thresholds. (c) Inventory classification under the prudential ceiling $0.22$ and continuity-operational ceiling $0.24$: green points are prudentially certified, blue points are certified only under the operational posture, diamonds and squares mark the respective strongly certified $\epsilon$-optimal subsets, and stars mark their references. (d) Both inventory references in allocation space. The posture is declared before policy comparison. Bounds are exact synthetic finite-class diagnostics, not empirical coverage guarantees.}
\label{fig:didactic}
\end{figure*}

\begin{figure*}[t]
\centering
\includegraphics[width=\textwidth]{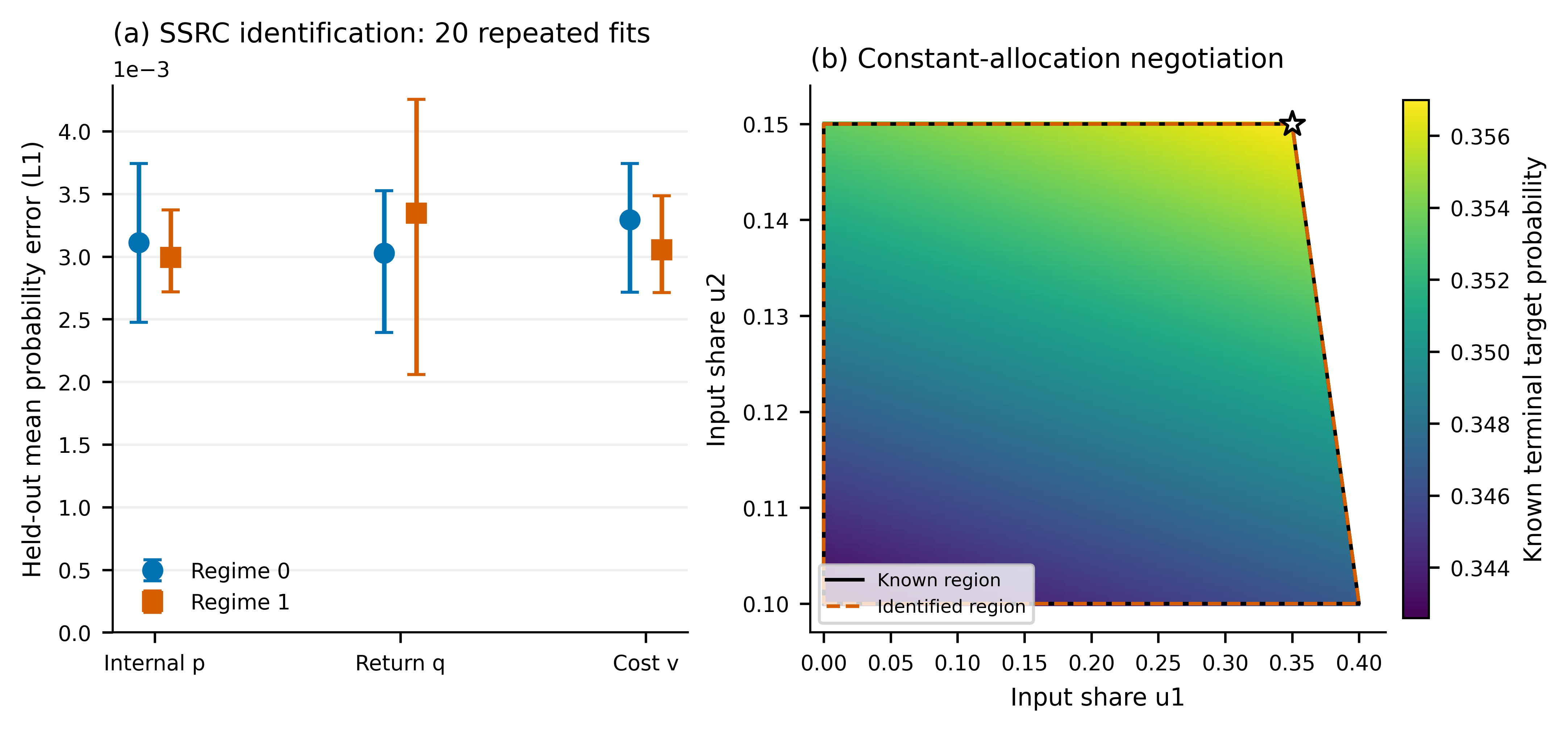}
\caption{Primary switched-additive calibration. (a) Median and interquartile range of held-out probability errors over 20 noisy SSRC fits, shown by regime and output. These bars are variability summaries, not confidence guarantees. (b) Constant-allocation negotiation polygon colored by known-model target probability. Known and identified feasible boundaries coincide, and the star marks their common selected optimum. The full polygon is $0.02$-near-optimal under both models.}
\label{fig:ssrc}
\end{figure*}

\begin{figure*}[t]
\centering
\includegraphics[width=\textwidth]{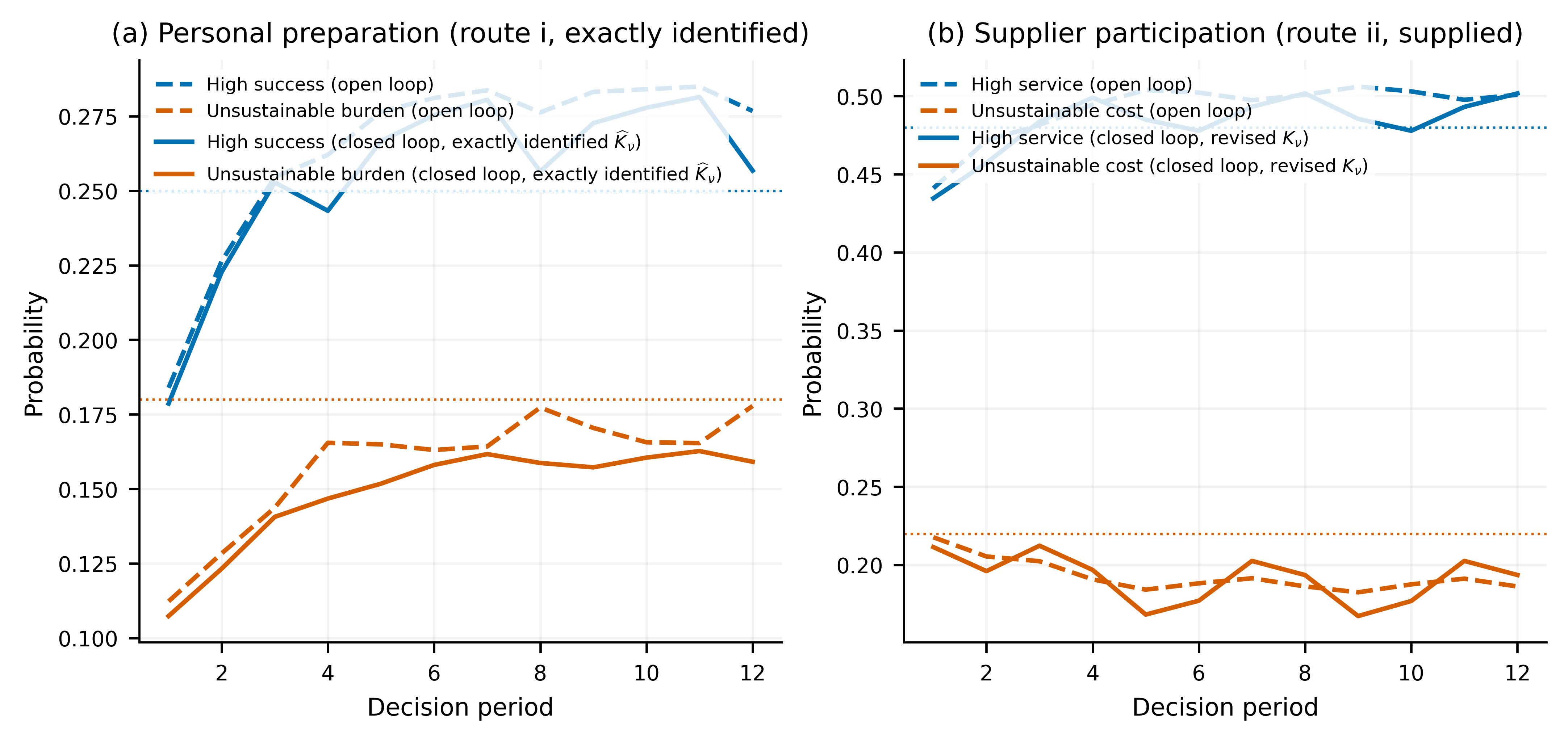}
\caption{Certified open-loop references versus certified closed-loop feedback. (a) Personal preparation: constant allocation $u^\star$ (dashed) and the exactly identified structured controller $\widehat K_\nu$ (route i, solid). (b) Supplier participation under the prudential posture: constant allocation $u^\star$ (dashed) and the revised operator-supplied rule (route ii, solid). Dotted lines mark the terminal favorable-return thresholds and applicable pointwise cost ceilings. Certification uses exact oracle discrepancies over each declared finite synthetic class; these are diagnostic margins, not empirical coverage guarantees.}
\label{fig:closedloop}
\end{figure*}

\subsection{Stochastic feedback closure: two certified feedback laboratories}
Sec.~II.D and the compact prototypes above are combined here: for each prototype, a regime-indexed stochastic quadratic feedback controller $u_t=K_{\nu_t}\eth_2(\bar p_t)$ is constructed and evaluated against the same acceptance thresholds used for the personal and prudential-inventory constant-allocation references. In these fully observed synthetic laboratories, $\bar p_t=p_t$. Route (i) (estimation from data) is used for the personal-preparation prototype and route (ii) (supplied candidates) for the inventory prototype. Exact simultaneous oracle discrepancies are computed over the two declared feedback candidates in each laboratory.

For the personal-preparation prototype, the controller is parameterized by a stochastic per-state preference matrix $A_\nu\in\mathbb R^{3\times4}$. Its canonical degree-two realization satisfies $K_\nu^\star\eth_2(p)=A_\nu p$ on $\Delta^3$. Exact synthetic pairs include the simplex vertices and interior samples, giving a rank-$4$ design for the minimal parametrization. The samplewise Frobenius least-squares fit in Eq.~\eqref{eq:minimal-feedback-identification} recovers $A_\nu$ with maximum coefficient error $3.33\times10^{-16}$ and reconstructs $\widehat K_\nu$ with worst-case held-out functional error $5.59\times10^{-16}$. The generating and identified closed-loop trajectories differ by at most $2.22\times10^{-16}$, so a numerical tolerance of $10^{-12}$ is used when comparing the exact benchmark. Thus the structured feedback law, rather than every coefficient of an unrestricted dense $3\times14$ representation, is exactly identifiable on the declared domain. Over the declared two-candidate feedback family, $\delta_R=0.000883$ and $\delta_C=0.001135$. For the identified controller, $\widehat R_H=0.25764$ and $\widehat C_H=0.16384$; the conservative values are 0.25676 and 0.16497, respectively. Hence $\rho=0.00676$, and this finite family is $\eta$-certifiably feasible for every $0\le\eta\le0.00676$.

For the inventory prototype, the declared two-candidate feedback family has oracle discrepancies $\delta_R=0.002495$ and $\delta_C=0.000855$. A first controller supplied from a stated purchasing-team escalation rule (route ii, no estimation) is certified unacceptable: its fitted terminal service probability is 0.42721 and even its upper bound, 0.42970, is below 0.48; its fitted maximum cost is 0.19360, and $\rho=-0.05529$. A single revision of the same rule, shifting escalation weight toward the reliable and expedite suppliers, is certified acceptable: $\widehat R_H=0.49997$ with lower bound 0.49748, and $\widehat C_H=0.21314$ with upper bound 0.21400. Evaluation under the generator gives 0.50172 and 0.21229, respectively. The revised rule has $\rho=0.00600$, so the tested family is $\eta$-certifiably feasible for $0\le\eta\le0.00600$. This is the intended use of certification for a supplied candidate: identical to an estimated one, and no more exempt from failing it.

Figure~\ref{fig:closedloop} compares, for each prototype, the certified closed-loop trajectory against the certified open-loop constant-allocation reference of Fig.~\ref{fig:didactic}(b,d). In both cases the pointwise cost ceiling is satisfied throughout the horizon and the favorable-return threshold is satisfied at the terminal period. The personal-preparation controller is the exactly identified realization described above and maintains a visibly larger margin below the burden ceiling than the open-loop reference. The revised inventory rule closely tracks its open-loop reference after certification. Neither comparison establishes that closed-loop feedback dominates constant allocation in general; it shows that both policy forms contain certified alternatives in these laboratories.

\subsection{Noisy feedback identification for customer-retention allocation}
The exact personal benchmark is complemented by a third synthetic laboratory in which feedback observations are noisy. This example concerns sequential allocation of a limited retention-intervention capacity, not standalone churn prediction. Four portfolio localizations represent stable, vulnerable, at-risk, and disengaged customers; three allocation categories represent routine service, personalized contact, and retention incentives. Return categories distinguish lost, retained, and expanded relationships, while cost categories distinguish low, moderate, and excessive intervention intensity. Ordinary and adverse commercial conditions define two observed regimes.

For a prescribed mean allocation $u_t=A_{\nu_t}p_t$, the observed executed share is generated by
\begin{equation}
N\widetilde u_t\sim\operatorname{Multinomial}(N,u_t).
\label{eq:noisy-executed-allocation}
\end{equation}
Return and cost localizations are observed analogously as multinomial event frequencies conditional on $(p_t,\widetilde u_t)$. The state localization and regime are observed without noise, deliberately excluding partial observation from this experiment. For each regime, $A_\nu$ is estimated from 160 state--allocation pairs by the constrained samplewise loss in Eq.~\eqref{eq:minimal-feedback-identification}; graph-supported return and cost maps are estimated from the corresponding event frequencies. Thirty independent replications are performed for $N\in\{100,500,2500\}$, with 2000 independently sampled state localizations per regime for functional evaluation.

The median held-out mean $\ell^1$ error of the feedback law decreases from 0.01606 to 0.00714 and 0.00339 as $N$ increases. Median on-trajectory oracle discrepancies in terminal return decrease from 0.00285 to 0.00128 and 0.00053; the corresponding maximum-cost discrepancies decrease from 0.00142 to 0.00084 and 0.00043. With illustrative thresholds 0.183 for terminal favorable return and 0.112 for maximum pointwise excessive cost, 17, 18, and 27 of the 30 fitted policies are oracle-certified acceptable, while 13, 12, and 3 remain unresolved. None is certified unacceptable.

Figure~\ref{fig:noisy-retention} separates policy-function recovery from decision classification. The minimal controller design has zero restricted nullity, whereas each graph-supported return and cost design has restricted nullity two in both regimes. The return and cost blocks fitted from closed-loop observations consequently retain substantially larger error on independently sampled off-policy actions, approximately 0.04--0.07 in mean $\ell^1$ error across these designs. This does not contradict the decreasing controller error. Two limitations coexist: the additive $W$--$V$ representation contains feasible observationally equivalent directions, and actions generated from state do not independently cover arbitrary state--action combinations. Thus larger event counts improve the identified feedback law and its on-policy decision resolution without establishing unique coefficient recovery or unrestricted off-policy prediction. The exact discrepancies used for classification remain synthetic oracle diagnostics, not empirical guarantees for a commercial application.

\begin{figure*}[t]
\centering
\includegraphics[width=\textwidth]{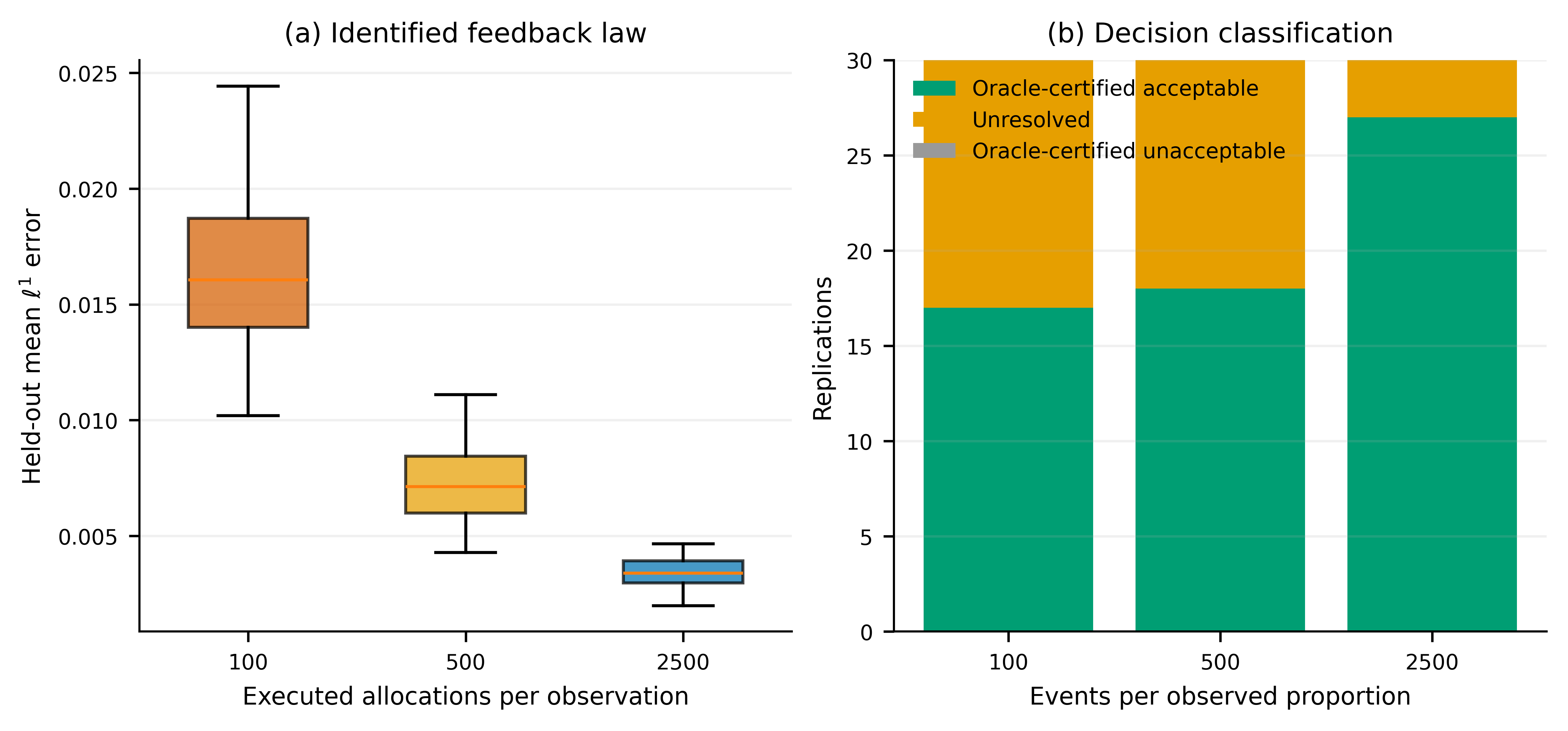}
\caption{Noisy customer-retention feedback laboratory over 30 independent replications. (a) Held-out functional error of the identified stochastic allocation law for multinomial sample sizes $N=100,500,2500$. Boxes show quartiles and medians; whiskers use the standard 1.5 interquartile-range rule and outliers are omitted. (b) Classification of the fitted policies under exact two-candidate oracle discrepancies in each synthetic replication. Increasing event counts reduce unresolved classifications, but these oracle margins are not data-derived commercial guarantees.}
\label{fig:noisy-retention}
\end{figure*}

\section{Discussion}
The central claim is not that every dynamic strategy possesses a unique stochastic matrix representation. It is that a structured probability representation can be sufficient for a declared decision even when some coefficients remain observationally equivalent. That sufficiency is conditional on the operating regimes, the excited domain, the policy class, the horizon, the events, and the acceptance margins. Changing any of these elements may turn a previously harmless null direction into a decision-relevant discrepancy. Identification diagnostics are therefore part of the decision result rather than a preliminary numerical formality.

This perspective separates three notions often conflated in data-driven control. Structural reduction removes repeated tensor words before fitting. Identifiability asks whether the observations distinguish the remaining graph-admissible stochastic directions. Decision sufficiency asks whether any unresolved directions can change the classification of relevant policies. The personal benchmark illustrates the distinction cleanly: its $14$-coordinate controller realization is already SSRC-reduced, yet cross-degree relations induced by the simplex make that ambient realization nonunique. The minimal stochastic matrix $A_\nu$ remains identifiable and determines the same allocation function. Thus a nonunique realization need not imply an ambiguous implemented policy.

Probability readouts also place this work near stochastic reservoir computing, where controlled Markov evolution can generate rich probabilistic features and finite sampling creates a hardware--precision tradeoff~\cite{ehlers2025stochastic}. The emphasis here is complementary. Universality concerns the expressive capacity of an architecture over a function class; IDS decision sufficiency concerns preservation of a particular acceptability classification under structural and statistical uncertainty. A universally expressive class need not be identifiable from a given closed-loop experiment, while a nonunique fitted representation may still be sufficient for a narrow decision. The retention laboratory demonstrates the latter possibility: increasing multinomial sample size improves the feedback function and reduces unresolved classifications, but does not remove intrinsic $W$--$V$ equivalence or supply independent coverage of arbitrary actions.

The same distinction matters for collective information in human-facing systems. Interindividual and intraindividual variability may represent adaptive or persistent organization rather than measurement noise~\cite{forkel2026neurovariability}. Conversely, finite observations, devices, protocols, and classification rules introduce genuine measurement uncertainty. Pooling these components into one residual would obscure both transportability and certification. The personal laboratory therefore represents probability summaries for a declared population or individual information source; it does not assume exchangeability, infer neurophysiological mechanisms, or convert population frequencies automatically into individual guarantees. The notion of functional vicariance discussed in neurovariability provides a useful analogy: different internal configurations may support a similar function. In IDS terms, structurally different policies or operators can be outcome-equivalent on selected times and coordinates. This is a conceptual connection, not an asserted biological correspondence.

Certification then determines what may responsibly be concluded from the representation. A fitted-model optimum is only nominal. An acceptable certificate additionally requires justified return and cost margins, while certified $\epsilon$-optimality requires simultaneous objective intervals over the declared finite class. The two inventory postures show why thresholds must be fixed from auditable pre-action conditions rather than chosen after inspecting the optimizer. The feedback experiments make the same point about policy provenance: an estimated controller and an operator-supplied rule face identical tests. The first supplier rule is rejected before a revised rule is certified, whereas the noisy retention policies may remain unresolved instead of being forced into an accept/reject label.

These distinctions also delimit the relation to stochastic optimal control. Chance-constrained and stochastic model predictive control methods optimize performance subject to probabilistic constraints~\cite{mesbah2016smpc}. The present framework does not replace those methods or establish recursive feasibility and closed-loop stability in their general sense. It supplies a structured representation, a conservative classification interface, and a negotiable set from which optimization or operational judgment may select. Likewise, regime switching supplies local nonlinear expressiveness but does not by itself validate a regime partition. State enrichment, additional regimes, or more complex feedback are warranted only when they improve prediction or resolve a decision boundary.

The evidence remains computational and synthetic. Exact oracle discrepancies make the finite-class logic auditable, but an application requires data-derived simultaneous bounds, observation and filtering models, validation of regime labels, and explicit separation of sampling error from structural heterogeneity. Closed-loop deployment also requires stability and safety analysis beyond the stochastic closure proved here. Joint trajectory models are needed for hitting, persistence, and cumulative-cost events; marginal output distributions are insufficient. Natural extensions include active excitation under operational constraints, posterior or set-valued propagation over observational-equivalence classes, outcome-level controller equivalence, minimal-complexity selection among certified policies, information-acquisition costs, and adaptive renegotiation.

\section*{Data Availability Statement}
The numerical examples in this work are synthetic and require no external empirical data. The programs, executable IDS prototype notebooks, saved numerical results, and figure-generation materials supporting this study will be made available in due course through the IntegratedDynamicStrategies GitHub repository~\cite{vides2026integrateddynamicstrategies}. The extended manuscript and additional didactic experiments will also be maintained in that repository. All numerical examples reported in this work are synthetic and do not incorporate proprietary, client, or personal data.

\par\medskip

\section*{Acknowledgments}
The author thanks colleagues and practitioners for observations that helped
motivate the inventory and customer-retention laboratories, and gratefully
acknowledges the institutional support provided by Universidad Nacional
Autónoma de Honduras (UNAH).

\par\medskip

\balance
\bibliographystyle{unsrtnat}
\bibliography{references}

@misc{vides2026localization,
  author        = {Vides, Fredy},
  title         = {Identifying Probability Localization Dynamics via Structured Stochastic Liftings},
  year          = {2026},
  eprint        = {2608.22686},
  archivePrefix = {arXiv},
  primaryClass  = {math.DS},
  doi           = {10.48550/arXiv.2608.22686},
  url           = {https://arxiv.org/abs/2608.22686v1},
  note          = {Version 1}
}

@article{jia2016,
  author  = {Jia, Chen and Jiang, Da-Quan and Qian, Min-Ping},
  title   = {Cycle Symmetries and Circulation Fluctuations for Discrete-Time and Continuous-Time {Markov} Chains},
  journal = {The Annals of Applied Probability},
  volume  = {26},
  number  = {4},
  pages   = {2454--2493},
  year    = {2016},
  doi     = {10.1214/15-AAP1152}
}

@article{banegas2025ssrc,
  author  = {Banegas, Lendy and Vides, Fredy},
  title   = {Stochastically Structured Reservoir Computers for Financial and Economic System Identification},
  journal = {IFAC-PapersOnLine},
  volume  = {59},
  number  = {36},
  pages   = {100--105},
  year    = {2025},
  doi     = {10.1016/j.ifacol.2026.03.018},
  url     = {https://www.sciencedirect.com/science/article/pii/S2405896326001084}
}

@article{ehlers2025stochastic,
  author  = {Ehlers, Peter J. and Nurdin, Hendra I. and Soh, Daniel},
  title   = {Stochastic Reservoir Computers},
  journal = {Nature Communications},
  volume  = {16},
  pages   = {3070},
  year    = {2025},
  doi     = {10.1038/s41467-025-58349-6},
  url     = {https://doi.org/10.1038/s41467-025-58349-6}
}

@article{forkel2026neurovariability,
  author  = {Forkel, Stephanie J. and Dulyan, Lilit and Schilling, Kurt G. and Thiebaut de Schotten, Michel},
  title   = {Neurovariability as a Signature of Adaptive Brain Function},
  journal = {Brain Structure and Function},
  volume  = {231},
  pages   = {83},
  year    = {2026},
  doi     = {10.1007/s00429-026-03138-0},
  url     = {https://doi.org/10.1007/s00429-026-03138-0}
}

@book{ljung1999system,
  author    = {Ljung, Lennart},
  title     = {System Identification: Theory for the User},
  edition   = {2},
  publisher = {Prentice Hall},
  address   = {Upper Saddle River, NJ},
  year      = {1999}
}

@article{forssell1999closedloop,
  author  = {Forssell, Urban and Ljung, Lennart},
  title   = {Closed-Loop Identification Revisited},
  journal = {Automatica},
  volume  = {35},
  number  = {7},
  pages   = {1215--1241},
  year    = {1999},
  doi     = {10.1016/S0005-1098(99)00022-9}
}

@article{paoletti2007hybrid,
  author  = {Paoletti, Simone and Juloski, Aleksandar Lj. and Ferrari-Trecate, Giancarlo and Vidal, Ren{\'e}},
  title   = {Identification of Hybrid Systems: A Tutorial},
  journal = {European Journal of Control},
  volume  = {13},
  number  = {2--3},
  pages   = {242--260},
  year    = {2007},
  doi     = {10.3166/ejc.13.242-260}
}

@article{mesbah2016smpc,
  author  = {Mesbah, Ali},
  title   = {Stochastic Model Predictive Control: An Overview and Perspectives for Future Research},
  journal = {IEEE Control Systems Magazine},
  volume  = {36},
  number  = {6},
  pages   = {30--44},
  year    = {2016},
  doi     = {10.1109/MCS.2016.2602087}
}

@misc{vides2026integrateddynamicstrategies,
  author       = {Vides, Fredy},
  title        = {{IntegratedDynamicStrategies}: Structured Stochastic Representations of Integrated Dynamic Strategies},
  year         = {2026},
  howpublished = {GitHub repository},
  note         = {\url{https://github.com/FredyVides/IntegratedDynamicStrategies}}
}

@article{simon1955satisficing,
  author  = {Simon, Herbert A.},
  title   = {A Behavioral Model of Rational Choice},
  journal = {Quarterly Journal of Economics},
  volume  = {69},
  number  = {1},
  pages   = {99--118},
  year    = {1955},
  doi     = {10.2307/1884852}
}

@article{nash1950bargaining,
  author  = {Nash, John F.},
  title   = {The Bargaining Problem},
  journal = {Econometrica},
  volume  = {18},
  number  = {2},
  pages   = {155--162},
  year    = {1950},
  doi     = {10.2307/1907266}
}

@article{hamilton1989regime,
  author  = {Hamilton, James D.},
  title   = {A New Approach to the Economic Analysis of Nonstationary Time Series and the Business Cycle},
  journal = {Econometrica},
  volume  = {57},
  number  = {2},
  pages   = {357--384},
  year    = {1989},
  doi     = {10.2307/1912559}
}

@book{liberzon2003switching,
  author    = {Liberzon, Daniel},
  title     = {Switching in Systems and Control},
  series    = {Systems \& Control: Foundations \& Applications},
  publisher = {Birkh\"auser},
  address   = {Boston},
  year      = {2003}
}

@book{lawson1974solving,
  author    = {Lawson, Charles L. and Hanson, Richard J.},
  title     = {Solving Least Squares Problems},
  publisher = {Prentice-Hall},
  address   = {Englewood Cliffs, NJ},
  year      = {1974}
}

@book{zipkin2000inventory,
  author    = {Zipkin, Paul H.},
  title     = {Foundations of Inventory Management},
  publisher = {McGraw-Hill},
  address   = {Boston},
  year      = {2000}
}

@book{miettinen1999nonlinear,
  author    = {Miettinen, Kaisa},
  title     = {Nonlinear Multiobjective Optimization},
  publisher = {Kluwer Academic Publishers},
  address   = {Boston},
  year      = {1999}
}
\end{document}